\newtheorem{conj}{Conjecture}
\newtheorem{definition}{Definition}
\newtheorem{theorem}{Theorem}
\newtheorem{lemma}{Lemma}
\newtheorem{corollary}{Corollary}
\newtheorem{example}{Example}
\newenvironment{proof}{\paragraph{Proof:}}{\hfill$\Box$}
\title{Continuous Patrolling Games}
\author{Steve Alpern\thanks{Warwick Business School, University of Warwick, Coventry CV4 7AL, UK, steve.alpern@wbs.ac.uk}  \and Thuy Bui\thanks{Rutgers Business School, Newark, NJ 07102, USA, tb680@business.rutgers.edu} \and Thomas Lidbetter\thanks{Rutgers Business School, Newark, NJ 07102, USA, tlidbetter@business.rutgers.edu} \thanks{Department of Engineering Systems and Environment, University of Virginia, Charlottesville VA 22904} \and Katerina Papadaki\thanks{Department of Mathematics, London School of Economics, London WC2A 2AE, UK, k.p.papadaki@lse.ac.uk}}
\begin{document}
	
	\maketitle

	\begin{abstract}
\noindent We study a patrolling game played on a network $Q$, considered as
a metric space. The Attacker chooses a point of $Q$ (not necessarily a node)
to attack during a chosen time interval of fixed duration. The Patroller
chooses a unit speed path on $Q$ and intercepts the attack (and wins) if she
visits the attacked point during the attack time interval. This zero-sum
game models the problem of protecting roads or pipelines from an adversarial
attack. The payoff to the maximizing Patroller is the probability that the
attack is intercepted. Our results include the following: (i) a solution to
the game for any network $Q$, as long as the time required to carry out the
attack is sufficiently short, (ii) a solution to the game for all tree
networks that satisfy a certain condition on their extremities, and (iii) a
solution to the game for any attack duration for stars with one long arc and
the remaining arcs equal in length. We present a conjecture on the solution
of the game for arbitrary trees and establish it in certain cases.
	\end{abstract}

	\newpage

\section{Introduction} \label{sec:intro}

Patrolling games were introduced at the end of \cite{AMP} to model the operational
problem of how to optimally schedule patrols to intercept a terrorist
attack, theft or infiltration. That paper, contrasting with earlier
adversarial patrolling (Stackelberg) versions, modeled the problem as a
zero-sum game between an Attacker and a Patroller, who wish to respectively
maximize and minimize the probability of a successful attack. The domain on
which the game was played out was taken to be a graph, with attacks
restricted to the nodes and taking a given integer number of periods. A patrol is a
walk on the graph, and intercepts the attack if it visits the attacked node
during the attack period. This could model a guard in an art museum who enters a room while a thief is in the midst of removing a valuable
painting from the wall. That paper was able to make some key observations
about their game, giving bounds on the value, but was unable to find the
value precisely or give optimal strategies except in some very limited
cases. \cite{PALM16} solved the game for line graphs, but the solution was
very complicated even for this apparently simple graph. In the
Conclusion section of the original paper \cite{AMP}, an extension of the
problem to continuous space and time was suggested. The purpose of this paper is to carry out this suggestion.

We allow attacks that have a prescribed duration~$\alpha$ to occur at any point of a continuous network $Q$. A unit speed patrol on $Q$ is said to intercept
the attack (and win for the Patroller) if it arrives at the attacked point at some
time during the attack. The value of the game is the probability of interception,
with best play on both sides. We find that optimal play for the Attacker typically involves mixing pure
attacks that take place at different times.

After this type of continuous game was first proposed in 2011,
it has been solved for some special networks. The circle network (or any Eulerian
network) is easy to solve: a periodic traversal of the Eulerian tour,
starting at a random point, is optimal for the Patroller; attacking starting
at a fixed time at a uniformly random location is optimal for the Attacker
(see \cite{ALMP} and \cite{Garrec}). The line segment network was solved in
\cite{ALMP}. In \cite{Garrec} a
solution for some values of $\alpha$ is given for the network with two nodes connected by three unit
length arcs, and a complete formulation of the general game is given,
including a proof of the existence of the value. The present paper extends to some extent all three of these prior results to general classes of networks: Eulerian networks to networks without leaf arcs; the line segment network to
trees; the three-arc network to networks with large girth - for small attack
times. The Area Editor has observed that ``in real-life the attacker has no incentive to hang out at the attack site - he would disappear as fast as he can following the attack. Therefore, small $\alpha$ is reasonable for many real-life situations.''

Our main results and chapter organization are as follows. Section~\ref%
{sec:definitions} presents several (mixed) strategies for the players that
can be used or adapted to obtain solutions of the game for various classes
of networks in later sections. We note that Eulerian networks have no
leaves, and Section~\ref{sec:noleaves} generalizes the solution of the
former to networks without leaves. In particular, as long as the attack time
is sufficiently short, we show that the attack strategy that chooses a point
uniformly at random is still optimal; an optimal strategy for the Patroller
is to follow a double cover tour of the network which never traverses
an arc consecutively in opposite directions (as described in Theorem~\ref%
{theorem:network_noleaves}). We also give a new algorithm for constructing
such a tour in Theorem~\ref{thm:opt-tour}. In Section~\ref{sec:arbitrary} we
allow the network to have leaves, and modify the optimal strategies of the
previous section to generate optimal strategies for arbitrary networks, as
long as the attack time is sufficiently short (see Theorem~\ref%
{thm:gen-girth}).

Section~\ref{sec:trees} considers trees and in particular those that satisfy
a condition we call the Leaf Condition. We give a precise definition of the condition, which requires some delicacy (Definition~\ref{def:leaf-cond}). In fact, any tree satisfies the Leaf Condition as long as the attack time is sufficiently short. Star networks
(trees with only leaf arcs) also satisfy the Leaf Condition for sufficiently
large attack times, and the only stars that do not satisfy the Leaf
Condition are those that have an arc that is longer than half the total
length of the network. In Theorem~\ref{theorem:trees} we solve the game for
all trees in the case that the Leaf Condition holds, giving a simple
expression for the value of the game in terms of the length of the network,
the attack time and another parameter. In Subsection~\ref{sec:trees2} Conjecture~\ref{conj} states that this expression is always equal to the value of the game
on trees. We establish the conjecture for some stars that do not
satisfy the Leaf Condition.

\section{Literature Review}

In addition to the papers discussed in the Introduction, which were the most
relevant to continuous patrolling, there is a more extensive literature on
adversarial patrolling. The problem of patrolling a perimeter has been
analyzed by \cite{Zoroa12} (where the attack location can move to adjacent
locations) and \cite{Lin19}, the latter in a continuous time context.
Extensions of \cite{ALMP} where the costs of successful attacks are time and
node dependent have been studied by \cite{Lin13} (for random attack times),
\cite{Lin14} (with imperfect detection) and \cite{Yolmeh19} (which includes
an application to an urban rail network).

Stackelberg approaches, with the Patroller as first mover, have been
pioneered in an artificial intelligence context by \cite{Basilico12} (which includes
an algorithm for large cases)  and \cite{Basilico17} (where the optimal
strategy in certain cases is for the Patroller to stay in place until the
sensor reveals an attack an unknown location).

More applied approaches to patrolling are of practical importance.
Applications to scheduling randomized security checks and canine patrols at
Los Angeles Airport have been developed and deployed in \cite{ARMOR}. The
United States Coast Guard also uses a game-theoretic system to schedule
patrols in the Port of Boston \citep{Coastguard}. Recently, a game theoretic
approach to schedule patrols to guard against poachers has been explored in
\cite{Poaching16} (where the novel algorithm PAWS was introduced) and \cite%
{Poaching19} (where the success of deploying PAWS in the field is
described). Patrolling to detect radiation and consequently nuclear threats was
modeled in the novel paper of  \cite{Hochbaum14}.

The possibility that the Attacker could know when the Patroller is nearby
(perhaps at the same node), raised in \cite{AMP}, has recently been studied
in \cite{AlpernKatsikas19}, \cite{AlpernKatsikas21} and \cite{Lin19} in different contexts. In the
former this knowledge helped the Attacker, in the latter, it did not.
Multiple patrollers have been considered in the robotics and computer
science literatures, where an important paper with a similar network
structure to ours is \cite{Czyzowicz17}.  A connection between patrols and
inspection games is made in \cite{Baston91} and between patrols and
hide-seek games in \cite{Garrec}. Restricting the Patroller to periodic
paths creates difficulties analyzed in \cite{ALP}.

\section{Formal Definitions for Network and Game}
\label{sec:definitions}

In this section we define the continuous patrolling game and present
definitions related to the connected network $Q$ on which it is played. For $Q$,
standard graph theoretic definitions must be modified for a network which is
considered as a metric space and a measure space, not simply a combinatorial
object.

To define $Q$, we begin with a graph $G$ with edges and
vertices, with the addition of a \textit{length} $\lambda \left(e\right) $
assigned to each edge $e$. We can then identify an edge $e$ with an open interval
of length $\lambda \left(e\right) $, endowed with Lebesgue measure and
Euclidean distance $d$, and consider $\lambda $ as a measure on $Q$, called
\textit{length}. The total length of $Q$ is denoted by $\mu =\lambda \left(
Q\right) $. The topology on these intervals gives a topology on their
union $Q$. A \textit{path} in $Q$ is a continuous function from a closed
interval to $Q$. We take the metric $d\left( x,y\right) $ on $Q$ as the
minimum length of a path between $x$ and $y$. A point $x$ of $Q$ is called a
\textit{regular} point if it has a neighborhood homeomorphic to an open
interval. The remaining non-regular points are called {\em nodes}. The {\em degree} of a point $y$ is defined as the number of connected components of a small neighborhood of $y$ after $y$ has been removed from it. Such a neighborhood is called a {\em punctured neighborhood} in the topology literature. A point of degree $2$ is always by definition regular, and hence not a node. We say that two nodes of $Q$ are {\em adjacent} if there is a path between them consisting only of regular points.  Such a path is called an {\em arc}.
A node of degree $1$ is called a \textit{leaf node}, and its incident arc is called a \textit{leaf arc}.
To ensure that every leaf arc has a single leaf node in its closure, we exclude
the line segment network from consideration. In any case the continuous
patrolling game has been solved for the line segment in \cite{ALMP}.

A \textit{circuit} in $Q$ is a closed path (that is, with the same startpoint and endpoint) consisting of distinct adjacent
arcs. A {\em tour} of $Q$ is a closed path visiting all points of $Q$, and a tour
of minimum length is called a \textit{Chinese Postman Tour (CPT)}. The
length of this path is denoted $\bar{\mu}$. It was shown by Edmonds and
Johnson (1973) that a CPT can be found in polynomial time, with respect to the
number of nodes. A closed path which is a circuit and a tour is called an
Eulerian tour. As is well known, a connected network has an Eulerian tour if
and only if it is Eulerian, defined as having nodes all of even degree. If
we double every arc of a network $Q$, the resulting network is Eulerian with
length $2\mu $, so $Q$ has a tour of length $2\mu $ and hence $\bar{\mu}\leq
2\mu $.

The continuous patrolling game is played on $Q$ as follows. The Attacker
chooses a point $x$ in $Q$ to attack, and a closed time interval $J$ of
given length $\alpha $ during which to attack it. Since $\alpha $ is fixed,
the \textit{attack interval }$J=\left[ \tau ,\tau +\alpha \right] $ is
determined by its starting time $\tau.$ The game and its value are
determined by the pair $\left( Q,\alpha \right) $. The Patroller chooses a
path $S\left( t\right) $, where $t\geq 0$, which we call a \textit{patrol},
satisfying
\begin{align}
d\left( S\left( t\right) ,S\left( t^{\prime }\right) \right) \leq \left\vert
t-t^{\prime }\right\vert ,\text{ for all }t,t^{\prime }\geq 0.
\label{Lipshitz}
\end{align}
For simplicity, we shall call a path satisfying the $1-$Lipshitz condition (%
\ref{Lipshitz}) a \textit{unit speed path}. We don't specify an upper bound
on the starting time of the attack, but in every case we have studied there
is an optimal mixed attack strategy in which all its (pure strategy) attacks
are over by time $4\mu $. A patrol is said to \textit{intercept} an attack
if it visits the attacked point while it is being attacked. The game is very
simply defined: the maximizing Patroller wins (payoff $P=1$) if her patrol
intercepts the attack. Otherwise, the Attacker wins (payoff $P=0$ to the
Patroller). The payoffs to the Attacker are reversed, so the game has
constant sum $1$. In other words, if the patrol is $S$ and the
attack is at point $x$ during the interval $J=\left[ \tau ,\tau +\alpha %
\right] $, then the payoff $P$ to the maximizing Patroller is given by%
\[
P\left( S,\left( x,J\right) \right) =\left\{
\begin{array}{cc}
1 & \text{if }x\in S\left( J\right) , \\
0 & \text{otherwise.}%
\end{array}%
\right.
\]%
For mixed strategies, the expected payoff can be interpreted as the probability that the attack is intercepted. The value of the game, denoted $V$, is the interception probability, with best play on both sides.

\cite{Garrec} used the fact that $P$ is lower semicontinuous to establish
the existence of a value~$V$ for this infinite game. We note that if $\alpha
=0$ then the Attacker can win almost surely by attacking uniformly on $Q$
(according to $\lambda $) at a fixed time; if $\alpha \geq \bar{\mu}$, the
Patroller can ensure a win by adopting a Chinese Postman Tour, starting
anywhere at time $0$ and repeating the tour with period $\bar{\mu}$. So to
avoid the trivial cases where one of the player can always win, we assume $%
0<\alpha < \bar{\mu} $.

We follow \cite{Garrec} in not imposing a finite time horizon.  However, if we require that the attack ends by some time $T>\alpha$, this is only a restriction on the Attacker's strategy set. Hence, all Patroller estimates (lower bounds on the value) would remain valid. Attacker estimates (upper bounds on the value) also remain valid for sufficiently large $T$ because all the optimal Attacker strategies presented in this paper end by a stated finite time. For example, the uniform attack strategy, discussed in the next subsection, ends by time $M+\alpha$, where $M$ can be chosen arbitrarily.

Throughout the paper the complement $Q-Y$ of a set $Y$ is denoted by $Y^{c}$.

\subsection{The Uniform and the Independent Attack Strategies}

Some networks, as we shall see in later sections, require Attacker
strategies specifically suited to their structure, such as attacks on leaf
nodes when the network is a tree. But there are also some general strategies
that are available on any network. Here we define two of these and present
the general bounds on the value that they give.

\begin{definition} [\textbf{Uniform attack strategy}]
\label{def:uniform}
A \textbf{uniform attack strategy} is a mixture of pure attacks that have a common attack time interval  $J=[M,M+\alpha]$, where $M$ can be chosen arbitrarily (for example $M=0$). The attacked point is chosen uniformly at random. That is, the probability that the attacked point lies in a set $Y$ is given by $\lambda \left( Y\right) /\mu$.
\end{definition}

We restate a lemma from \cite{ALMP} for completeness (the proof is in the Appendix).
\begin{lemma} \label{lem:uniform}
Against any patrol $S$, a uniform attack strategy is intercepted with probability not more than $\alpha /\mu $. Consequently $V\leq \alpha
/\mu $ for any network.
\end{lemma}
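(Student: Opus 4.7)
The plan is to compute, for an arbitrary patrol $S$, the interception probability against a uniform attack, and then use the $1$-Lipschitz property of $S$ to bound it.

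Fix any patrol $S$ and any uniform attack strategy with attack interval $J=[M,M+\alpha]$. Since the attacked point $x$ is drawn according to $\lambda/\mu$ on $Q$, and the attack is intercepted exactly when $x$ lies in the image $S(J)=\{S(t):t\in J\}$, the interception probability equals $\lambda(S(J))/\mu$. So the lemma reduces to showing that the $\lambda$-measure of the image of $J$ under $S$ is at most the length of $J$, namely
\[
\lambda\bigl(S(J)\bigr)\le \alpha.
\]
This is the only nontrivial point in the argument, and is where I expect to spend the effort.

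To prove this bound, I would use the $1$-Lipschitz condition (\ref{Lipshitz}) arc by arc. For each arc $a$ of $Q$, the preimage $S^{-1}(a)\cap J$ is a (relatively) open subset of $J$, so it is a countable disjoint union of open intervals $I_{a,k}$. On each $I_{a,k}$, the composition with the homeomorphism identifying $a$ with an open interval of length $\lambda(a)$ is $1$-Lipschitz into $\mathbb R$, and therefore its image has Lebesgue measure at most $\lambda(I_{a,k})$. Summing over $k$ and over arcs, and recalling that nodes form a $\lambda$-null set, one gets
\[
\lambda\bigl(S(J)\bigr)\le \sum_{a}\sum_{k}\lambda(I_{a,k})\le \lambda(J)=\alpha,
\]
as desired. (Equivalently, one can invoke the general fact that a $1$-Lipschitz image of a set of Lebesgue measure $\alpha$ has $1$-dimensional Hausdorff measure at most $\alpha$, but the direct arc-by-arc argument avoids any appeal to geometric measure theory.)

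Combining the two steps gives interception probability at most $\alpha/\mu$ against any $S$, so the uniform attack strategy guarantees the Attacker a value of at most $\alpha/\mu$, and hence $V\le \alpha/\mu$ as claimed. The main obstacle, as indicated, is the reduction of the Lipschitz bound on the network to the familiar Lipschitz bound on $\mathbb R$; once one handles the branching at nodes (which contribute zero measure) this is routine.
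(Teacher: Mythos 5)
Your proof is correct and takes essentially the same route as the paper's (which is relegated to the Online Appendix, being the standard argument from the pipeline paper): the interception probability equals $\lambda(S(J))/\mu$, and a unit-speed patrol covers $\lambda$-measure at most $\alpha$ during an interval of length $\alpha$. One small caveat in your arc-by-arc step: for two points on the same arc, the network metric $d$ can be strictly smaller than their coordinate distance along the arc (via a shortcut through the rest of $Q$), so the claim that $u\circ S$ is $1$-Lipschitz on $I_{a,k}$ does not follow pointwise from (\ref{Lipshitz}); it does follow, though, because near any interior point of an arc $d$ coincides with the coordinate distance, so $u \circ S$ is locally $1$-Lipschitz on the interval $I_{a,k}$ and hence $1$-Lipschitz there, after which your measure bound and conclusion stand.
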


We now define independence for sets and strategies.
\begin{definition}[\textbf{Independent set}]
A subset $I$ of $Q$ is called  \textbf{independent}  if the distance between
any two of its points is at least $\alpha$.
For any subset $Y$ of $Q$, the set $W \equiv W\left( Y\right) $ is the subset of $Q$ consisting of all points at
distance at most $\alpha /2$ from $Y$.
%We denote the complement of $W$ by $W' \equiv Q-W$.
\end{definition}

\begin{definition}[\textbf{Independent attack strategy}]
Given an independent set $I$ of cardinality $l$ and the set $W \equiv W(I)$, the \textbf{independent attack strategy} is as follows for  $p=\frac{l \alpha}{\lambda \left( W^c \right) +l\alpha }$.
\begin{enumerate}
\item With probability $p$ attack at an element of $I$ chosen equiprobably at a start time chosen
uniformly at random in $J=\left[ 0,\alpha \right] $.

\item With probability $1-p$ attack uniformly on $W^c$ at start time $\alpha/2$.
\end{enumerate}
\end{definition}

The independent attack strategy randomizes over both time and space, unlike the strategy of the same name defined in \cite{AMP} for the discrete patrolling game, which randomizes only over space. The following result gives an upper bound on the strategy's interception probability.

\begin{theorem} \label{thm:independent}
Suppose $I$ is an independent subset of $Q$ of cardinality $l$. Then
\[
V\leq \frac{\alpha }{\lambda \left( W^c \right) +l\alpha },
\]%
which the Attacker can ensure by adopting the independent attack strategy.
If $\lambda \left( W^c \right) =0$ we have $V\leq 1/l$. Furthermore,
if $I$ is the set of leaf nodes, and leaf arcs have lengths exceeding $%
\alpha /2$, then
\[
V\leq \frac{\alpha }{\mu +l\alpha /2}.
\]
\end{theorem}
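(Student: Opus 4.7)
The plan is to show that the interception probability of any unit-speed patrol $S$ against the independent attack strategy is at most $\alpha/M$, where $M = \lambda(W^c) + l\alpha$. With $p = l\alpha/M$ and $1-p = \lambda(W^c)/M$, a Fubini calculation rewrites the total interception probability as $(I_1 + I_2)/M$, where
\[
I_1 = \int_0^\alpha \bigl|I \cap S([\tau,\tau+\alpha])\bigr|\, d\tau, \qquad I_2 = \lambda\bigl(W^c \cap S([\alpha/2,3\alpha/2])\bigr).
\]
The entire main bound therefore reduces to the joint inequality $I_1 + I_2 \leq \alpha$. Note that the choices of the mixing weight $p$ and the shift $\alpha/2$ in the second scenario are precisely what is needed for this clean reduction.

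I would then introduce the sets
\[
E_1 = \{\tau \in [0,\alpha] : S([\tau,\tau+\alpha]) \cap I \neq \emptyset\}, \qquad E_2 = \{t \in [\alpha/2, 3\alpha/2] : S(t) \in W^c\}.
\]
Two elementary reductions handle $I_1$ and $I_2$ separately. Because $I$ has pairwise distances at least $\alpha$ while $S([\tau,\tau+\alpha])$ has diameter at most $\alpha$, the intersection $I \cap S([\tau,\tau+\alpha])$ contains at most one point for almost every $\tau$, hence $I_1 \leq \lambda(E_1)$. Because $S$ is $1$-Lipschitz and $W^c \cap S([\alpha/2,3\alpha/2]) = S(E_2)$, the length cannot be stretched, so $I_2 \leq \lambda(E_2)$.

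The heart of the proof is to show that $E_1$ and the translate $E_2 - \alpha/2$ are disjoint subsets of $[0,\alpha]$. If some $\tau$ lay in both, there would exist $t^* \in [\tau,\tau+\alpha]$ with $S(t^*) \in I$, while simultaneously $S(\tau+\alpha/2) \in W^c$. But $|t^* - (\tau+\alpha/2)| \leq \alpha/2$, so by (\ref{Lipshitz}) we get $d(S(t^*),S(\tau+\alpha/2)) \leq \alpha/2$, which places $S(\tau+\alpha/2)$ within $\alpha/2$ of $I$ and contradicts $S(\tau+\alpha/2) \in W^c$. Disjointness yields $\lambda(E_1) + \lambda(E_2) = \lambda(E_1) + \lambda(E_2 - \alpha/2) \leq \alpha$, giving the main bound. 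I expect this disjointness step to be the main conceptual obstacle: the naive bounds $I_1 \leq \alpha$ and $I_2 \leq \alpha$ in isolation only give $2\alpha/M$, so one must exploit the structural tension between a patrol performing well against scenario~1 (spending time near $I$) and performing well against scenario~2 (spending time in $W^c$).

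The two specialized statements follow by substitution. Setting $\lambda(W^c) = 0$ in the main bound gives $V \leq 1/l$ directly. When $I$ is the set of $l$ leaf nodes and every leaf arc has length exceeding $\alpha/2$, the $\alpha/2$-ball around each leaf node is an initial segment of length exactly $\alpha/2$ inside its leaf arc, and these $l$ segments are pairwise disjoint, so $\lambda(W) = l\alpha/2$ and $\lambda(W^c) = \mu - l\alpha/2$; substituting produces $V \leq \alpha/(\mu + l\alpha/2)$.
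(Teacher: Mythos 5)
Your proof is correct, and it takes a genuinely different route from the paper's. The paper argues by cases on patrol behavior: a patrol staying in $W$ intercepts at most $p/l$; one staying in $W^c$ at most $(1-p)\alpha /\lambda \left( W^{c}\right) $; and one crossing from $W^c$ to a point of $I$, arriving at time $t\in \left[ \alpha ,2\alpha \right] $, intercepts at most $(1-p)(t-\alpha )/\lambda \left( W^{c}\right) +p(2\alpha -t)/(l\alpha )$, which the choice of $p$ makes identically $\alpha /\left( \lambda \left( W^{c}\right) +l\alpha \right) $; the reverse crossing is dispatched ``by time symmetry.'' Your Fubini reduction $P=(I_{1}+I_{2})/M$ plus the disjointness of $E_{1}$ and $E_{2}-\alpha /2$ replaces this enumeration with a single measure inequality $\lambda (E_{1})+\lambda (E_{2})\leq \alpha $; your disjointness argument is exactly the structural tension behind the paper's crossing computation, but it handles \emph{all} patrols uniformly --- including those oscillating between $W$ and $W^c$ repeatedly or approaching several points of $I$ --- which the paper's cases cover only implicitly, so your version is arguably the more rigorous one. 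The cost is two measure-theoretic steps the paper avoids: $I_{2}\leq \lambda (E_{2})$ needs the fact that a unit-speed path cannot increase one-dimensional measure (argue via curve length; on a network a set of diameter $\alpha$ can have measure exceeding $\alpha$), and your almost-everywhere claim for $I_{1}\leq \lambda (E_{1})$ has a genuine boundary case: ``diameter at most $\alpha$'' does not exclude two points of $I$ at distance exactly $\alpha$ both lying in $S([\tau ,\tau +\alpha ])$. That case forces $S(\tau ),S(\tau +\alpha )\in I$ with $S$ a unit-speed geodesic in between, and such $\tau$ are themselves $\alpha$-separated, hence at most two lie in $[0,\alpha ]$ --- a null set, so your inequality stands, but this deserves a sentence. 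Your two specializations ($\lambda \left( W^{c}\right) =0$, and leaf nodes with arcs longer than $\alpha /2$ giving $\lambda (W)=l\alpha /2$) match the paper's.
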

\begin{proof} Let $S$ denote any patrol and suppose the independent attack strategy is
adopted. If $S$ remains in $W$ during $J$, it intercepts the attack
with probability at most $p/l$, where $l$ is the cardinality of $I$.
Similarly, since $S$ has unit speed, if it remains in $W^c$ during
time $J$, it intercepts an attack with probability at most $(1-p)\left( \alpha
/\lambda \left( W^c \right) \right) $. The chosen value of $p$ is the
one that makes these probabilities both equal to $\alpha /\left( \lambda
\left( W^c \right) +l\alpha \right) $.

Finally, suppose the patrol $S$ starts in $W^c$ at time $0$, reaches a point $%
x\in I$  at some time $t$, $\alpha \leq t\leq 2\alpha $, early enough to
intercept some attacks on $I$ and late enough to intercept some attacks on $%
W^c$. Since the latest such a patrol can leave $W^c$ is at
time $t-\alpha /2$, it can cover a set of length at most $\left( t-\alpha
/2\right) -\left( \alpha /2\right) =t-\alpha $ in $W^c$ after the
attacks at time $\alpha /2$, intercepting a fraction $\left( t-\alpha
\right) /\lambda \left( W^c \right) $ of the attacks there. In
addition, the patrol can intercept the attacks at $x$ starting between $%
t-\alpha $ and $\alpha $, so a fraction $\left( 2\alpha -t\right) /\alpha $
of the attacks at $x$, or $\left( 2\alpha -t\right) /l \alpha$ of the
attacks on $I$. Thus the maximum probability that a patrol arriving at $I$
at time $t$ can intercept an attack is given by
\[
\left( 1-p\right) \frac{t-\alpha }{\lambda(W^c) }+p\frac{2\alpha -t}{l \alpha}=%
\frac{\alpha }{\lambda\left( W^c\right) + l \alpha. }
\]
By time symmetry, the same bound holds if the patrol starts at a point of $I$ and ends up in $W^c$.

If $\lambda \left( W^c \right) =0$ we have $V\leq 1/l$ trivially.

To prove the last assertion note that if $I$ is the set of leaf nodes, and leaf arcs have lengths exceeding $\alpha /2$, then leaf nodes form an independent set $I$ and $\lambda \left( W\right) =l\alpha /2$. 
\end{proof}

\subsection{A General Strategy Available to the Patroller}

Some patrol strategies come from finding closed paths on the network with
specific properties, and then have the Patroller go around them periodically
starting at a random point. Normally the closed path will be a tour, but we give a more general definition in case it is not.

\begin{definition}[Randomized periodic extension]
If $S:\left[ 0,L\right] \rightarrow Q$ is a closed unit speed path, we can
extend it to various patrols $S_{\Delta }:[0,\infty )\rightarrow Q$ of
period $L$ by the definition%
\[
S_{\Delta }\left( t\right) =S\left( ~\left( t+\Delta \right) \text{ mod }%
L~\right) ,\text{ for all }t\geq 0.
\]%
Thus $S_{\Delta }$ is a periodic patrol that starts at the point $S\left(
\Delta \right) $ at time $0$. The \textbf{randomized periodic extension} $%
\tilde{S}$ of $S$ is defined as the random mixture of the pure patrols $S_{\Delta },
$ with $\Delta $ chosen uniformly in the interval (or circle) $\left[ 0,L%
\right] $. In the special case that $S$ is a
Chinese Postman Tour, with $L=\bar{\mu}$, we call $\tilde{S}$ a Chinese
Postman Tour strategy.
\label{def:CPT}
\end{definition}

\subsection{$k-$covering Tours and Identifying Points of $Q$}

If a network $Q$ has an Eulerian tour, its randomized periodic extension
makes an effective patrolling strategy, because it visits all regular points
equally often (once), so the Attacker is indifferent as to where to attack.
If there is no Eulerian tour (the general case), we can still use this idea,
if there is a tour which visits all regular points equally often. In Theorem~\ref{thm:opt-tour} and Lemma~\ref{def:tree-patrol}, we will show that there is indeed such a tour which visits all regular points twice (a $2-$cover), with some additional properties. This idea is
formalized in the following.

\begin{theorem}
\label{k-tour}
Suppose $S:\left[ 0,L\right] \rightarrow Q$ is a closed unit speed tour that
visits every point of $Q$ at $k$ times which are separated by at least $%
\alpha $ (mod $L)$. Suppose $\tilde{S}$ is the randomized periodic extension of $%
S$ (from Definition \ref{def:CPT}). Then we have

\begin{description}
\item[(i)] $\tilde{S}$ intercepts any attack with probability at least $%
k\alpha /L$.

\item[(ii)] If $L=k\mu $, then the randomized periodic extension $\tilde{S}$ (for the Patroller) and a uniform
attack strategy (for the Attacker) are optimal and the value of the game is
given by $\alpha /\mu $.
\end{description}
\end{theorem}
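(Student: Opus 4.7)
The plan is to reduce part (i) to a disjointness calculation on the circle $[0,L]$, and to obtain part (ii) immediately from (i) together with Lemma~\ref{lem:uniform}.

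For (i), I would fix an arbitrary pure attack, that is, a point $x \in Q$ and a start time $\tau$, giving attack interval $J = [\tau, \tau + \alpha]$. By hypothesis, the tour $S$ visits $x$ at $k$ times $t_1, \ldots, t_k$ in $[0,L]$. For a given shift $\Delta$, the extended patrol $S_\Delta$ visits $x$ at time $t \geq 0$ exactly when $(t+\Delta) \bmod L \in \{t_1, \ldots, t_k\}$. Interception therefore holds iff some such visit falls in $J$, which rearranges to the condition
\[
\Delta \in E := \bigcup_{i=1}^{k} \bigl[\, t_i - \tau - \alpha,\; t_i - \tau\, \bigr] \pmod{L}.
\]
Each of the $k$ intervals in $E$ has length $\alpha$. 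The key step is to verify that they are pairwise disjoint as subsets of the circle of circumference $L$: two intervals $[t_i - \tau - \alpha,\, t_i - \tau]$ and $[t_j - \tau - \alpha,\, t_j - \tau]$ overlap (mod $L$) exactly when the cyclic distance from $t_i$ to $t_j$ is strictly less than $\alpha$, which is ruled out by the assumption that the visit times are separated by at least $\alpha$ (mod $L$). Therefore $\lambda(E) = k\alpha$, and since $\Delta$ is uniform on $[0,L]$, the interception probability under $\tilde S$ is $k\alpha/L$. Because the bound holds uniformly in $(x,\tau)$, the same bound holds against any mixed attack strategy.

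For (ii), substituting $L = k\mu$ into (i) yields that $\tilde S$ guarantees an interception probability of at least $k\alpha/(k\mu) = \alpha/\mu$, so $V \geq \alpha/\mu$. Lemma~\ref{lem:uniform} provides the matching upper bound $V \leq \alpha/\mu$, attained by a uniform attack strategy. Hence $V = \alpha/\mu$, and both $\tilde S$ and the uniform attack strategy are optimal.

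The only part that could be at all delicate is the disjointness argument, where one must handle the cyclic wrap-around carefully; but once the pairwise cyclic-distance condition is translated into non-overlap of shifted intervals, the argument is routine. The rest of the proof is essentially bookkeeping.
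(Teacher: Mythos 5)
Your proposal is correct and takes essentially the same approach as the paper: you form the same union of shifted intervals $\bigcup_{i}\left[ t_{i}-\tau -\alpha ,t_{i}-\tau \right]$ (mod $L$), use the $\alpha$-separation hypothesis for disjointness to get measure $k\alpha$, and deduce part (ii) by combining part (i) with the upper bound from Lemma~\ref{lem:uniform}, exactly as the paper does. The only difference is cosmetic: you make the cyclic wrap-around check explicit where the paper simply asserts disjointness.
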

\begin{proof} For part (i), suppose the attack takes place at a point $x$ in $Q$ starting
at some time $\tau $. Let $t_{i}$, $i=1,\dots ,k$ be times, separated by at
least $\alpha $, such that $S\left( t_{i}\right) =x$. The attack will be
intercepted by $S_{\Delta }$ if $\Delta $ is in the set $Y=\cup _{i}\left[
t_{i}-\tau -\alpha ,t_{i}-\tau \right]$ (modulo $L$), since in this case the Patroller
will visit $x=S\left( t_{i}\right) $ at some time in $\left[ \tau ,\tau
+\alpha \right] $. The separation assumption ensures that these intervals
are disjoint, and since they all have length $\alpha $, the length (Lebesgue
measure) of $Y$ is given by $\left\vert Y\right\vert =k\alpha $. By the
definition of $\tilde{S}$, the probability that $\Delta \in Y$ is equal to $%
\left\vert Y\right\vert /L=k\alpha /L$, as claimed in (i), so we have $V\geq
k\alpha /L=k\alpha /k\mu =\alpha /\mu $ under the assumption of part (ii).
By Lemma~\ref{lem:uniform}, we also have that $V\leq \alpha /\mu $, so the two inequalities
give $V=\alpha /\mu $, with $\tilde{S}$ and the uniform attack strategy
optimal. 
\end{proof}

As suggested above in the introductory remarks of this subsection, taking $k=1$ in Theorem~\ref{k-tour} gives another proof of the following elementary result
of \cite{ALMP} and \cite{Garrec}.

\begin{corollary}
If $Q$ is Eulerian, with Eulerian tour $S$, then for $\alpha \leq \mu $ we
have $V=\alpha /\mu $. ($V=1$ if $\alpha \geq \mu $.) In this case the randomized periodic extension $\tilde{S}$
and the uniform attack strategy are optimal for the Patroller and
Attacker, respectively. Furthermore, for a Chinese Postman Tour $S$ of any network $Q$, taking $k=1$
and $L=\bar{\mu}$ gives $V\geq \alpha /\bar{\mu}$.
\label{Eulerian}
\end{corollary}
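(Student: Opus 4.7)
The plan is to derive the corollary as a direct specialization of Theorem~\ref{k-tour} with $k=1$, treating the three assertions (Eulerian with $\alpha \le \mu$, the degenerate case $\alpha \ge \mu$, and the CPT lower bound) separately.

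For the main Eulerian assertion, I would first observe that an Eulerian tour $S$ traverses each arc exactly once, so its length is $L=\mu$. For $k=1$ every point of $Q$ is visited at a single time per period, so the ``separated by at least $\alpha$ (mod $L$)'' condition in Theorem~\ref{k-tour} is vacuous. Hence both hypotheses of part~(ii) of that theorem are met with $k=1$ and $L=k\mu=\mu$. Invoking part~(ii) directly yields $V=\alpha/\mu$ and simultaneously identifies $\tilde{S}$ and the uniform attack strategy as optimal, which is exactly the content of the first statement.

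For the trivial range $\alpha\geq\mu$, I would argue separately that the Patroller wins with certainty. Adopting $\tilde{S}$ from an Eulerian tour of period $\mu\leq\alpha$, any attack interval $[\tau,\tau+\alpha]$ of length $\alpha$ contains a full period, so the Patroller visits every point of $Q$—in particular the attacked point—during the attack interval. Combined with $V\leq 1$ this gives $V=1$.

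For the final assertion about a Chinese Postman Tour $S$ of an arbitrary network $Q$, I would again apply Theorem~\ref{k-tour}, this time only part~(i), with $k=1$ and $L=\bar{\mu}$. The separation condition is once more vacuous for $k=1$, so part~(i) gives that $\tilde{S}$ intercepts any attack with probability at least $k\alpha/L=\alpha/\bar{\mu}$, whence $V\geq \alpha/\bar{\mu}$.

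There is essentially no obstacle here beyond bookkeeping: the only thing to verify carefully is that the Eulerian tour indeed has length exactly $\mu$ (so that $L=k\mu$ in part~(ii) is met with $k=1$) and that the ``at least $\alpha$ apart'' hypothesis becomes vacuous when $k=1$. The slightly separate case $\alpha\ge\mu$ is the only part not handled by a direct citation of Theorem~\ref{k-tour}, and it is immediate from the periodicity argument above.
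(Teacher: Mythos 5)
Your proposal is correct and takes essentially the same route as the paper, which presents this corollary precisely as the specialization of Theorem~\ref{k-tour} to $k=1$ (with the degenerate case $\alpha\geq\mu$ following from the periodicity observation already made in Section~\ref{sec:definitions}, since for Eulerian $Q$ we have $\bar{\mu}=\mu$). One minor imprecision: the separation condition for $k=1$ is not quite vacuous---read cyclically it amounts to requiring $\alpha\leq L$, so that the single interval $\left[t_{1}-\tau-\alpha,\,t_{1}-\tau\right]$ has measure $\alpha$ modulo $L$ in the proof of Theorem~\ref{k-tour}(i)---but this holds in both of your applications ($\alpha\leq\mu$ by hypothesis in the Eulerian case, $\alpha<\bar{\mu}$ by the paper's standing assumption in the CPT case), so nothing breaks.
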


It is useful to note for applications to patrolling by $m$ robots, that if
in Theorem 2 we require that $S$ visits every point at $k$ times separated
by time intervals $m\alpha $, then $m$ Patrollers can intercept any attack
with probability at least $mk\alpha /L$ (or $1$, if $mk\alpha /L \geq 1$). To see this conclusion, pick $\Delta $ as above and let the path of the $i$'th Patroller
(robot) be defined by $S_{i}\left( t\right) =S\left( \Delta +i(L/m)+t\right)
$. The arrival times at any point of $Q$ are then separated by at least $\alpha $. This reasoning shows that in our later lower bounds for $V$, these can be
multiplied by the number of Patrollers, with an upper bound of $1$.

We conclude this section with an observation on the effect of identifying
points of $Q$ on the value. \cite{AMP} considered the effect
of identifying two nodes of a graph. Here, we identify two {\em points} of the
network $Q$, using the well known quotient topology. In Figure~\ref{fig:identifying_points}
we identify the arc midpoints $C$ and $D$
of the network $Q$ to produce a new network~$Q^{\prime }$.

\begin{figure}[H]
\center
  \includegraphics[scale=0.3]{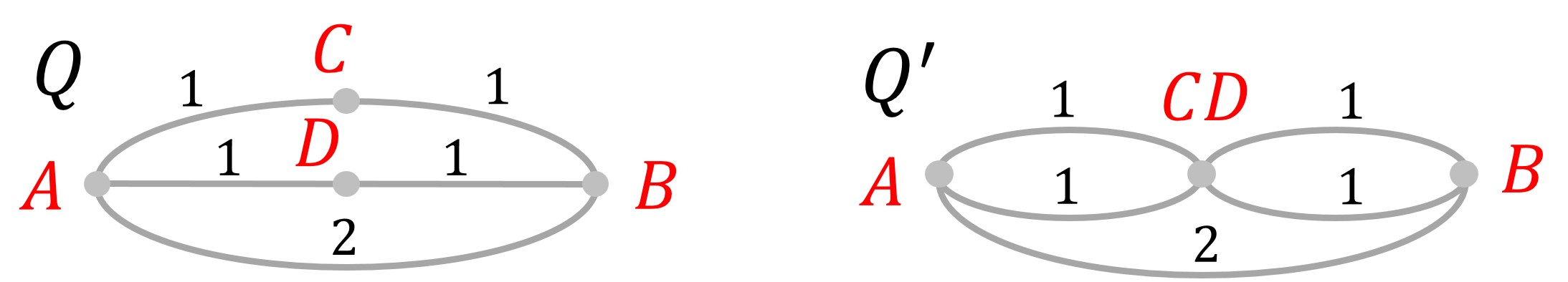}
  \caption{Identifying points $C$, $D$ of $Q$ to obtain $Q'$.}
  \label{fig:identifying_points}
\end{figure}

We may first look at two cases which have already been solved, the line
segment $Q_{line}=\left[ 0,1\right] $ and the circle $Q_{circle}=\left[ 0,1%
\right] ~\text{ mod }1$ (which is obtained from the line segment by identifying the endpoints), with say $\alpha =1/2$. From~\cite{ALMP}, we have $V\left( Q_{line}\right) =\alpha /\left( \mu +\alpha \right)
=1/3$. %, since $\alpha <\mu =1$.
However as the circle is Eulerian, we have $V\left( Q_{circle}\right) =\alpha /\mu =1/2$, which is larger. It is easy
to show that identifying points cannot decrease the value. Of course if we
further identify points on the circle, we get new points of degree $4$, so
the resulting Eulerian network retains the value of $1/2$.

\begin{lemma} \label{lemma:identifying}
Suppose $Q^{\prime },d^{\prime }$ is the metric space obtained from $Q,d$ by
replacing the metric $d$ with a smaller metric $d^{\prime }$, that is, with $%
0\leq d^{\prime }\left( x,y\right) \leq d\left( x,y\right) $ for all $x,y\in
Q=Q^{\prime}$. Then $V\left( Q^{\prime },d^{\prime }\right) \geq V\left(
Q,d\right)$. Furthermore, if $Q^{\prime }$ is obtained from $Q$ by
decreasing the length of an arc or simply identifying two points $x$ and $y$, the
same result holds.
\end{lemma}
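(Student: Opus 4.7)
The plan is to prove the main assertion first and then derive the two ``furthermore'' cases as direct corollaries by exhibiting the right $1$-Lipschitz surjection.

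For the main assertion, the key observation is that the Patroller's strategy set can only grow when the metric is relaxed. Concretely, if $S:[0,\infty)\to Q$ satisfies $d(S(t),S(t'))\le|t-t'|$, then $d'(S(t),S(t'))\le d(S(t),S(t'))\le|t-t'|$, so $S$ is also a unit-speed path in $(Q,d')$. The Attacker's pure strategy set (pairs $(x,J)$) and the payoff $P(S,(x,J))=\mathbf{1}[x\in S(J)]$ depend only on the underlying point set, not on the metric. Therefore, taking $\sigma^{*}$ to be an optimal mixed Patroller strategy on $(Q,d)$, $\sigma^{*}$ remains feasible on $(Q,d')$ and intercepts each pure attack with the same probability as before, so the Patroller can guarantee at least $V(Q,d)$ on $(Q,d')$.

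For the ``furthermore'' part, the underlying point set changes, so I would reduce to the main case by constructing a natural $1$-Lipschitz surjection $\phi:(Q,d)\to(Q',d')$. When two points $x,y$ are identified, take $\phi=\pi$ the quotient map, with $d'$ the induced quotient pseudometric; then $d'(\pi(u),\pi(v))\le d(u,v)$ follows immediately from the chain definition of the quotient pseudometric. When an arc $a$ has its length decreased from $L$ to $L'$, take $\phi$ to be linear compression by the factor $L'/L$ on $a$ and the identity elsewhere; a routine case analysis on whether the shortest $u$–$v$ path uses $a$ shows $d'(\phi(u),\phi(v))\le d(u,v)$. Given an optimal Patroller strategy $\sigma^{*}$ on $Q$, its pushforward under $\phi$ yields a mixed Patroller strategy on $Q'$ whose pure paths are still unit-speed (by $1$-Lipschitz composition). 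Against any pure attack $(x',J)$ on $Q'$, interception by $\phi\circ S$ is equivalent to $S$ visiting the preimage $\phi^{-1}(x')$ during $J$, which is at least as likely as $S$ visiting any one fixed $x\in\phi^{-1}(x')$; but the latter probability is already at least $V(Q)$ under $\sigma^{*}$. Hence $V(Q')\ge V(Q)$.

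I do not expect a serious obstacle here: the only things to check are the $1$-Lipschitz property of $\pi$ and of the compression $\phi$ (both immediate from their definitions) and that the pushforward of a measure over paths is well-defined, which is automatic because $\phi$ is continuous. The substantive idea — that shrinking distances can only help the Patroller — is captured by the single inequality $d'(S(t),S(t'))\le d(S(t),S(t'))$, and everything else is bookkeeping around changes in the underlying point set.
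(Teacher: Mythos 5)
Your proof is correct and takes essentially the same route as the paper's: the one-line observation that $d'\leq d$ enlarges the Patroller's (unit-speed) strategy set while leaving the Attacker's strategies and the payoff untouched, plus a $1$-Lipschitz surjection (quotient map, respectively linear compression) to transport patrols in the two ``furthermore'' cases. The only cosmetic point is that you invoke an optimal mixed strategy $\sigma^{*}$, whereas \cite{Garrec} guarantees only the existence of the value; replacing $\sigma^{*}$ by $\varepsilon$-optimal strategies and letting $\varepsilon\to 0$ removes even that dependence.
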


%\begin{theorem}[Identifying]
%Let $Q^{\prime }$ be obtained from a network $Q$ by identifying two points
%of $Q$. Then for any $\alpha $, $V\left( Q^{\prime }\right) \geq V\left(
%Q\right) $.
%\end{theorem}
%
%\proof{Proof.}
%This is the same as my similar result from search games. We have given the
%maximizer (Patroller) more pure strategies, while keeping the same strategy
%set for the minimizer (Attacker). Hence the value of the game can go up but
%not down. WE NEED A MORE SOPHISTICATED PROOF. LET'S ALL THINK ABOUT THIS.
%\endproof

The proof of Lemma~\ref{lemma:identifying} is given in the Appendix. An application of it is given at the end of Section~\ref{sec:noleaves}.

\section{Networks Without Leaves} \label{sec:noleaves}

To extend Corollary~\ref{Eulerian} to general networks, we first note that
Eulerian networks have no leaf arcs, so we attempt to find such a
tour $S$ satisfying the hypothesis of Theorem~\ref{k-tour} for networks
without leaf arcs. It turns out that taking $k=2$ in Theorem~\ref{k-tour} is high enough.
We can find such a tour (see Theorem~\ref{theorem:network_noleaves}) if $\alpha $ is
sufficiently small with respect to the \emph{girth} $g$ of $Q$, defined for
networks as the minimum length of a circuit in $Q$, and if $Q$ has no circuits then $g = \infty$. (For networks with unit
length arcs, our definition of girth coincides with the usual integer definition of the girth of a
graph.) Our first result is the following.

\begin{theorem} \label{thm:opt-tour}
For any network $Q$ there is a tour $S_2$ which covers every arc twice and for which no arc is traversed
consecutively in opposite directions, except for leaf arcs.
\end{theorem}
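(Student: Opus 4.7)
The plan is to work in the doubled network $Q_2$, formed from $Q$ by replacing each arc $a$ with two parallel copies $a^+, a^-$. Every vertex of $Q_2$ has even degree, so $Q_2$ is Eulerian, and any Eulerian circuit of $Q_2$ projects onto a tour of $Q$ covering each arc exactly twice. Translated into $Q_2$, the no-backtrack condition of the theorem becomes: find an Eulerian circuit in which, at every non-leaf vertex $v$, no two consecutive edges are the two copies $\{a^+, a^-\}$ of a common arc $a$ incident to $v$. I will call such a pair a \emph{parallel transition} and seek an Eulerian circuit of $Q_2$ with no parallel transitions at any non-leaf vertex.

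I would phrase the argument via transition systems: a choice, at each vertex of $Q_2$, of a perfect matching on the incident edges. Any such system decomposes the edges of $Q_2$ into a disjoint union of closed trails, and yields an Eulerian circuit precisely when the decomposition has a single trail. At a leaf vertex (degree $2$ in $Q_2$) the matching is forced and parallel, which is allowed since the arc is a leaf arc. At a non-leaf vertex, which has degree $2d \ge 6$ in $Q_2$ by the paper's convention that nodes have degree $1$ or $\ge 3$, one must choose a matching on the $2d$ half-edges avoiding the $d$ forbidden parallel pairs; an explicit cyclic pairing across different arcs, $\{a_i^+, a_{i+1}^-\}_{i=1}^d$, does the job. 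This gives an initial transition system $T$ with the correct local behaviour.

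The remaining task is to modify $T$ so that its trail decomposition has a single trail, while preserving non-parallelism. Because $Q_2$ is connected and every leaf lies on exactly one trail (its transition being forced), whenever $T$ produces multiple trails some non-leaf vertex $v$ lies on two of them, say $C$ and $C'$. Picking transitions $\{e_1, e_2\} \subset C$ and $\{f_1, f_2\} \subset C'$ at $v$, the standard Eulerian swap argument shows that replacing them by either $\{e_1, f_1\}, \{e_2, f_2\}$ or $\{e_1, f_2\}, \{e_2, f_1\}$ fuses $C$ and $C'$ into one trail and leaves all other trails unchanged. A short case check shows that at least one swap is non-parallel: if, say, $\{e_1, f_1\}$ are the two copies of an arc $a$, then $e_2$ and $f_2$ are copies of arcs distinct from $a$, so the alternative swap pairs halves of different arcs. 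Iterating the merge strictly decreases the number of trails and produces, in finitely many steps, a single Eulerian circuit of $Q_2$ whose projection is the required tour $S_2$.

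The hard part is this merging step: ensuring (a) that two trails always meet at a non-leaf vertex where a swap is even possible, and (b) that at least one of the two swap options avoids creating a parallel pair. Point (a) follows from connectedness of $Q_2$ together with the fact that a leaf vertex lies on only one trail; point (b) rests on the observation that any half-edge is a copy of exactly one arc, which rules out both swap options simultaneously introducing a parallel pair.
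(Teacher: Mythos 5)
Your proof is correct, and it reaches the theorem by a genuinely different route than the paper. Both arguments pass to the doubled network and reduce the problem to finding an Eulerian circuit avoiding the forbidden (parallel) transitions, but the paper obtains this from its Lemma~\ref{lemma:paired}, a refinement of Euler's theorem proved Hierholzer-style: a circuit is grown online under local rules (Rules 1--5) guaranteeing the walk is never forced to exit through a paired passage, and successive $\ast$-circuits are then spliced together, with a small case analysis on the first and last passages of the appended circuit ($d \neq b'$, $e \neq a'$). You instead work offline with a transition system: fix any non-parallel matching of half-edges at each vertex (your cyclic pairing, available since non-leaf vertices have degree $2d \geq 6$), take the resulting decomposition into closed trails, and repeatedly merge two trails meeting at a shared vertex --- necessarily non-leaf, since a leaf vertex lies on one trail --- via the standard transition swap. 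Your key case check is sound: since each half-edge has exactly one forbidden partner (the other copy of its arc contributes only one half-edge at the vertex, so the four half-edges involved are copies of at most two arcs), the two crossed re-pairings cannot both produce a parallel pair. This invariant-plus-swap argument is arguably cleaner and more symmetric than the paper's rule-based construction, and it in fact proves the paper's Lemma~\ref{lemma:paired} in full generality with no extra work, since your case check only uses that each passage is paired with at most one other. What the paper's version buys is a one-pass constructive algorithm, which the authors later use to build tours explicitly (e.g., for $K_4$); your proof is also effective but requires a post-processing merge phase. One mild divergence in setup, no worse than the paper's own glossing: you keep degree-two leaf vertices in $Q_2$ (formally disallowed by the paper's network conventions, but harmless in a multigraph argument), where the forced parallel transition is exactly the permitted backtrack on a leaf arc; the paper instead converts leaf arcs to loops in $\hat{Q}$ and must remark separately that loops are repeated in the same direction --- a loop subtlety (two half-edges of one copy at a common vertex) that your convention conveniently keeps out of $Q_2$ whenever $Q$ itself is loop-free.
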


Theorem~\ref{thm:opt-tour} is not new; it was proved by \cite{Sabidussi}. See also \cite{Klavzar} and \cite{Eggleton}. We originally proved Theorem~\ref{thm:opt-tour} independently and subsequently found it in the literature. Our proof, based on the new result, Lemma~\ref{lemma:paired}, is elementary.

The way we will prove Theorem~\ref{thm:opt-tour} is to double every arc of $Q$ to create an network $\hat{Q}$.
Then $\hat{Q}$ is Eulerian and has an Eulerian tour. We note that in Euler's Theorem (finding an Eulerian tour in
graphs of even degree), we can control to some extent the construction of
the tour. The following refinement of Euler's Theorem (Lemma~\ref{lemma:paired}) is based on some
simple modifications of the traditional proof and shows that we can control
the pairing of entered and exited \emph{passages} of the tour at every node.
Formally, a \emph{passage} at a node $x$ is a pair $\left( x,a\right) $, where $a$
is an arc incident to $x$. So a node of degree $d$ has $d$ passages and
every arc is part of two passages.

\begin{lemma} \label{lemma:paired}
Suppose $Q$ is a connected Eulerian network such that at every node the passages are identified in pairs (they are ``paired''). Then there is an Eulerian tour $S$ of $Q$ satisfying%
\begin{equation}
\text{$S$ never enters and leaves a node via paired passages.}
\label{paired}
\end{equation}
\end{lemma}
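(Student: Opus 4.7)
The plan is to adapt the classical splicing proof of Euler's theorem, maintaining property~\eqref{paired} as a running invariant. Because $Q$ is Eulerian and the paper's convention forbids nodes of degree $2$, every node of $Q$ has even degree at least $4$. In particular, at each node every passage is paired with exactly one other passage, so at least two further passages remain as non-paired alternatives — which is precisely the ``slack'' the construction will exploit.

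I would first construct an initial closed walk $W_{0}$ satisfying~\eqref{paired} by a greedy procedure: pick any node $v_{0}$ with any outgoing passage $p_{0}$, and at each subsequent step, upon arriving at a node $v$ via a passage $p$, take some unused outgoing passage $q$ at $v$ for which $\{p,q\}$ is not a paired pair. The Eulerian condition keeps the unused degree at every intermediate node even, and the $\ge 4$-degree hypothesis guarantees that such a $q$ exists at each intermediate step, so the walk eventually returns to $v_{0}$. A mild adjustment of the initial passage $p_{0}$ (and, if necessary, one final local swap at $v_{0}$) ensures that the transition that closes the loop at $v_{0}$ is itself non-paired.

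If $W_{0}$ already uses every arc we are done; otherwise, by connectedness there is a node $v$ on $W_{0}$ with an unused incident arc. Let $W_{0}$ enter $v$ via $p_{\text{in}}$ and leave via $p_{\text{out}}$ at some chosen visit. Running the same greedy procedure on the subgraph of unused arcs, I would build a closed walk $W'$ from $v$ to $v$ satisfying~\eqref{paired} whose first passage $p_{1}$ and last passage $p_{2}$ additionally satisfy that $\{p_{\text{in}},p_{1}\}$ and $\{p_{2},p_{\text{out}}\}$ are non-paired. Splicing $W'$ into $W_{0}$ at this visit replaces the single transition $\{p_{\text{in}},p_{\text{out}}\}$ at $v$ by two new transitions $\{p_{\text{in}},p_{1}\}$ and $\{p_{2},p_{\text{out}}\}$, both non-paired, so the spliced walk still satisfies~\eqref{paired}. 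Iterating the splice until no unused arcs remain produces the desired Eulerian tour $S$.

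The main obstacle is the splicing step: the first and last passages of $W'$ at $v$ are \emph{simultaneously} constrained by the pairing of $p_{\text{in}}$ and $p_{\text{out}}$, and when the unused degree at $v$ happens to be $2$ the two available passages may fail to satisfy both constraints at once. I would handle this delicate case by exploiting the fact that $v$ can appear at several visits of $W_{0}$ — the splice may be performed at any of them — and by taking, if needed, a short initial detour that enriches the set of passages available at the splice point. The $\ge 4$-degree hypothesis is precisely what supplies the degrees of freedom needed to resolve these clashes; at a degree-$2$ node the unique transition is necessarily paired, so~\eqref{paired} could not be enforced at all, which is why the hypothesis is essential.
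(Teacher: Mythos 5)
Your overall architecture (a greedy circuit construction maintaining condition~(\ref{paired}) as an invariant, followed by repeated splicing) is the same as the paper's, but your central existence claim is false as stated: it is not true that, merely because every degree is at least $4$, each arrival at an intermediate node leaves an unused departure passage unpaired with the arriving one. The greedy walk can paint itself into a corner on a later visit. Concretely, take a node of degree $6$ with passages paired as $\{1,2\}$, $\{3,4\}$, $\{5,6\}$. Suppose the walk first arrives via passage $1$ and leaves via $3$; later it arrives via $4$ and, perfectly legally under your rule (the pair of $4$ is $3$, already used), leaves via $2$; on its third arrival, via $5$, the only untraversed passage is $6$, which is paired with $5$ --- the walk is stuck. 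This is exactly why the paper's proof adds a lookahead rule (its Rule~3): whenever arrival at a node leaves three untraversed passages of which exactly two are paired, the walk must depart by one of that pair; the paper then verifies that this makes the fatal configuration (one untraversed passage, paired with the arrival) impossible at every non-initial node. Your proposal contains no substitute for this rule, so your ``greedy procedure'' is not guaranteed to run to completion. The closing transition at $v_0$ is similarly waved through (``a mild adjustment\dots one final local swap''), where the paper uses a concrete rule (its Rule~4: if two untraversed passages remain at the start node and one is the pair $P'$ of the initial passage, take $P'$ now) and checks that it works; an unspecified ``local swap'' could break~(\ref{paired}) at the transitions it touches.

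The splicing clash you flag as the ``main obstacle'' is real, but you leave it unresolved, whereas the paper dispatches it with a short case analysis rather than with detours or re-chosen visits. Re-root the existing circuit $C$ at the junction node $z$ so that it begins with passage $a$ and ends with passage $b$; one can choose the first passage $d$ of the new circuit $C'$ but not its last passage $e$. If the pair $a'$ of $a$ is not in $C$, set $d=a'$: then $d\neq b'$ (since $a\neq b$) and $e\neq a'$ (since $e\neq d$, each passage being used once). If $a'$ is already in $C$, then $e\neq a'$ automatically, because $C'$ uses only passages outside $C$, and one need only pick some unused $d\neq b'$, which exists. Your proposed remedies --- splicing at a different visit of $v$, or ``a short initial detour that enriches the set of passages'' --- are neither developed nor obviously sound (a detour consumes passages and must itself respect~(\ref{paired})). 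So as written the argument has two genuine gaps, located precisely where the paper's Rules~3--4 and its two-case choice of $d$ do the real work.
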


The proof of Lemma~\ref{lemma:paired} can be found in the Appendix.

As mentioned at the beginning of Section \ref{sec:definitions} there are no nodes of degree $2.$ Thus, the minimum node degree in our Eulerian network is $4.$

Now we are ready to prove Theorem~\ref{thm:opt-tour}.

\paragraph{Proof of Theorem \ref{thm:opt-tour}.}
Let $\hat{Q}$ be the Eulerian network obtained from $Q$ by doubling every arc. (This action has the effect of replacing leaf arcs with loops of double the length.)
%So between any two distinct adjacent nodes of $\hat{Q}$ there are an even number of arcs.
At every node of $\hat{Q}$ we pair passages that correspond to the
same passage of $Q$. Now apply Lemma~\ref{lemma:paired} to $\hat{Q}$ to obtain an Eulerian
circuit $\hat{S}$ of $\hat{Q}$ satisfying condition (\ref{paired}). The result is $S_2$, a \emph{double cover of $Q$} (a tour of $Q$ where every arc is traversed twice), in which consecutive arcs are distinct, except for leaf arcs. For loops, an arc may be repeated consecutively, but always in the same direction both times.
\hfill$\Box$

The proof of Lemma~\ref{lemma:paired} gives rise to an algorithm for constructing an Eulerian tour of $\hat{Q}$ satisfying condition~(\ref{paired}), and hence a tour of $Q$ of the form described in the statement of Theorem~\ref{thm:opt-tour} (named $S_2$). Indeed, by following the rules listed in the proof of Lemma~\ref{lemma:paired}, we obtain a circuit $C$ in $\hat{Q}$ satisfying~(\ref{paired}); by recursively applying the rules to the connected components of $\hat{Q}-C$ and appending these circuits to $C$ at appropriate points, we can obtain an Eulerian tour of $\hat{Q}$ satisfying~(\ref{paired}).

We illustrate the creation of the $\ast$-circuit described above for the network $K_4$ depicted in Figure~\ref{fig:K4}. Doubling each arc, we give the extra arc the same label as the original arc but with a prime. Applying the rules of the proof of Lemma~\ref{lemma:paired}, starting at the bottom left node, we obtain a circuit: $a,b,c,d,e,c',a',f,d'$. Removing this circuit leaves the network consisting of arcs $b',e'$ and $f'$, which is already a circuit. Adding this circuit at the first possible opportunity, we obtain the Eulerian tour $a,b',e',f',b,c,d,e,c',a',f,d'$.

\begin{figure}[H]
\center
  \includegraphics[scale=0.3]{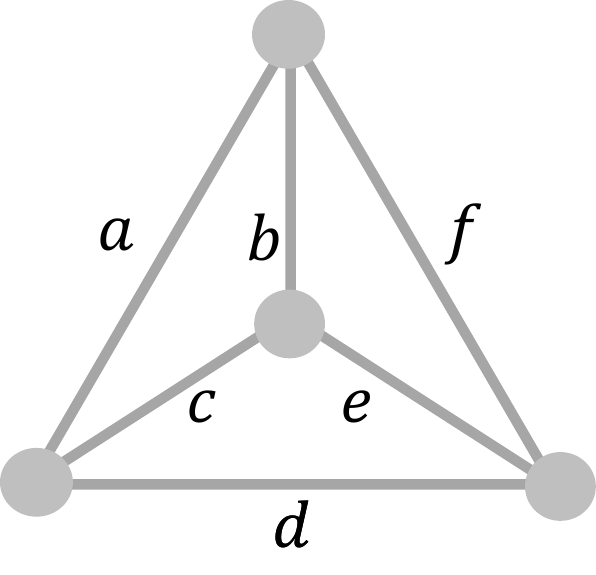}
  \caption{The network $K_4$.}
  \label{fig:K4}
\end{figure}

\begin{theorem} \label{theorem:network_noleaves}
Suppose $Q$ is a network without leaf arcs. Then for $\alpha \leq g$, where $g$ is the girth, we have the following:
\begin{enumerate}
\item	The value of the game is $V=\alpha/\mu$.
\item	For the Attacker, any uniform attack strategy is optimal.
\item	For the Patroller, the randomized periodic extension $\tilde{S_2}$ is optimal, for any tour $S_2$ given by Theorem~\ref{thm:opt-tour}.
\end{enumerate}
\end{theorem}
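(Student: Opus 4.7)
The plan is to apply Theorem~\ref{k-tour}(ii) with $k=2$ to a tour produced by Theorem~\ref{thm:opt-tour}. Since $Q$ has no leaf arcs, Theorem~\ref{thm:opt-tour} yields a tour $S_2$ of length $L=2\mu$ which double-covers $Q$ and never traverses any arc consecutively in opposite directions. Because $L=k\mu$, part~(ii) of Theorem~\ref{k-tour} will deliver all three conclusions at once---$V=\alpha/\mu$ with optimal strategies $\tilde{S_2}$ and the uniform attack strategy---provided that every point of $Q$ is visited at two times whose cyclic separation on $[0,L]$ is at least $\alpha$. As $\alpha\leq g$, it suffices to prove both cyclic gaps are at least $g$.

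Fix a regular point $x$ in the interior of an arc $a$ with endpoints $u,v$; it is visited at precisely two times $t_1<t_2$, one from each traversal of $a$. Suppose first that both traversals of $a$ use the same direction, say $u\to v$. Between $t_1$ and $t_2$ the patrol goes $x\to v$ along $a$, then follows a walk $W$ from $v$ to $u$ that avoids $a$ entirely (both copies of $a$ in the double cover have been used by the two traversals), and finally $u\to x$ along $a$. The concatenation of $W$ with $a$ is a nontrivial closed walk in $Q$, so it contains a cycle, which has length at least $g$; hence $t_2-t_1=|W|+\lambda(a)\geq g$. Applying the same reasoning on the complementary side of the cycle $[0,L]$ bounds $L-(t_2-t_1)$ from below by $g$. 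Now suppose the two traversals of $a$ have opposite directions. Then between $t_1$ and $t_2$ the patrol traces $x\to v$ along $a$, a closed walk $W'$ at $v$ avoiding $a$, and $v\to x$ along $a$. The no-consecutive-reversal property forbids $W'$ from being trivial---otherwise the first traversal $u\to v$ would be immediately followed by $v\to u$---so $W'$ contains a cycle in $Q-a$ of length at least $g$, giving $t_2-t_1\geq 2\,d(v,x)+g\geq g$. The analogous closed walk at $u$ bounds the other gap.

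Any node $y$ of $Q$ is visited more than twice, but we only need \emph{some} pair of visits with cyclic separation at least $\alpha$; letting regular points approach $y$ along an incident arc and passing to the limit in the argument above produces such a pair. The main obstacle is the opposite-direction case, where one must invoke the no-consecutive-reversal property of $S_2$ to force the intermediate closed walk to be nontrivial---without this, the girth lower bound would not apply. Once the separation condition has been verified, Theorem~\ref{k-tour}(ii) with $k=2$ and $L=k\mu=2\mu$ directly yields $V=\alpha/\mu$, the optimality of the uniform attack strategy, and the optimality of $\tilde{S_2}$.
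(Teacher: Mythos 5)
Your proposal is correct and takes essentially the same route as the paper: the paper likewise feeds the double-cover tour $S_2$ of Theorem~\ref{thm:opt-tour} into Theorem~\ref{k-tour}(ii) with $k=2$ and $L=2\mu$, arguing that the restriction of $S_2$ between consecutive visits to any point is a non-backtracking closed path (a circuit) and hence has length at least the girth $g\geq\alpha$. Your same-direction/opposite-direction case split and the limiting argument at nodes simply spell out in more detail what the paper compresses into the single observation that this restriction is a circuit.
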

\begin{proof}
Let $S_2$ be a tour of $Q$ given by Theorem~\ref{thm:opt-tour}. Note that it has length $L = 2 \mu$. Since there are no leaf arcs,
any two consecutive arcs of $S_2$ are distinct. Suppose some point $x$ of $Q$
is reached by $S_2$ at consecutive times $t$ and $s$ with $t<s$. Let $Z$ denote the restriction
of $S_2$ to the interval $\left[ t,s\right]$. Then $Z$ is a circuit of length $s-t$ and hence $s-t\geq g$, by the definition of girth. Hence $V=\alpha/\mu$, by Theorem~\ref{k-tour}(ii) with $k=2$ and since $\alpha \leq g$. 
\end{proof}

For the network $K_4$ depicted in Figure~\ref{fig:K4}, assuming all arcs have length 1, the girth $g$ is $3$. So for $\alpha \leq 3$, the uniform attack strategy is optimal and the Patroller strategy $S_2$ is optimal, where $S_2$ is the tour $a,b',e',f',b,c,d,e,c',a',f,d'$.

As a further example, consider $Q$ to be a network with two nodes $A$ and $B$ connected by
three arcs of lengths $a \leq b \leq c$. Then $g=a+b$ and $\mu =a+b+c$, so we have by Theorem~\ref{theorem:network_noleaves} that the value is $V(\alpha)=\alpha /\left( a+b+c\right) $ for $\alpha \leq a+b$. This network, with $a=b=c=1$ (and hence $g=2$), was studied by \cite{Garrec}, who found (among other results) that $V(\alpha)=\alpha/3$ for $\alpha \le 2$ and $V(\alpha) \le f(\alpha) \equiv 1-(1/3)(2-\alpha/2)^2$ for $\alpha \in [2,10/3]$. Since $f(\alpha)<\alpha/3$ for $\alpha \in (2,10/3]$ ($f(\alpha)=\alpha/3$ for $a=2$ and $f'(\alpha)=(4-\alpha)/6<1/3$ for $\alpha>2$), the Patroller cannot obtain an interception probability of $\alpha/3$ for $\alpha$ in this interval, so the bound $\alpha \le g=2$ in Theorem~\ref{theorem:network_noleaves} is tight.

The condition $\alpha \le g$ specified in Theorem~\ref{theorem:network_noleaves} is a sufficient but not necessary condition. Consider a network $Q_{5}$ with two nodes connected by five arcs labeled as
$1,2,3,4,5$, with arc $i$ having length $i$. The girth is given by $g=g\left(
Q_{5}\right) =1+2=3$. However, suppose we obtain a double cover (with $k=2$) $%
S$ of $Q$  described by the sequence $[1,2^{\prime },3,4^{\prime
},5,1^{\prime },2,3^{\prime },4,5^{\prime}]$, where unprimed arcs go from, say, node $A$
to node $B$ and primed arcs go from node $B$ to node $A$. The shortest
return time to a regular point is for a point $x$ near node $B$ on the arc of length 5. After
leaving $x$, going to nearby $B$, the patrol traverses arcs of lengths $%
1+2+3+4=10$ before going back to $x$ from $B$. Note that $S$ returns to $A$ after gaps of $3,7,6,5$ and $9$, so at two time points separated by $14$ (at the start and after the gap of $6$). Also $B$ is visited twice separated by a gap of $14$.  So for the network $Q_{5}$
we have $V=\alpha /\mu $ for $\alpha \leq 10$ rather than just for $\alpha
\leq 3$. This observation leads to combinatorial questions about the maximum
shortest circuit in a $k$-cover of a network $Q$. As noted above based on
Garrec's analysis of the three arc network, in certain cases $V=\alpha /\mu $
fails for all $\alpha >g$.

Now let $Q$ be a network with two nodes connected by $n$ arcs.  If $n$ is even, then $Q$ is Eulerian and thus, by Corollary~\ref{Eulerian}, $V=\alpha/\mu$ for all $\alpha$. If $n$ is odd then our example $Q_{5}$ generalizes easily to the following.

\begin{theorem}
Suppose $Q$ is a network with two nodes connected by an odd number of arcs. Then $V=\alpha/\mu$
for $\alpha \leq \mu -D$, where $D$ is the length of the longest arc.
\end{theorem}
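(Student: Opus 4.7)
The plan is to produce a double-cover tour $S_2$ of length $2\mu$ that satisfies the hypothesis of Theorem~\ref{k-tour} with $k=2$, thereby yielding $V=\alpha/\mu$.

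Label the $m=2n+1$ arcs between the two nodes $A$ and $B$ as $1,2,\ldots,m$ with lengths $\ell_1,\ldots,\ell_m$ and longest-arc length $L$. I take $S_2$ to be the tour that traverses the arc sequence $1,2,\ldots,m,1,2,\ldots,m$ while alternating the direction of each successive traversal between $A\to B$ and $B\to A$ (exactly as in the $Q_5$ example preceding the statement). The structural point making this well defined is that because $m$ is odd, a full round through all $m$ arcs ends at the opposite node from where it started, so the second round continues the alternation consistently and the tour closes at the starting node after total length $2\mu$. Each arc $i$ is thus traversed once in each direction, and consecutive arcs of $S_2$ are distinct.

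Because each arc $i$ occupies positions $i$ and $i+m$ in the sequence, its two traversals begin exactly $\mu$ time units apart (a single round has length $\mu$). Consequently, for a regular point $x$ on arc $i$ at distance $u\in(0,\ell_i)$ from $A$, the two visit times have cyclic gaps $\mu+\ell_i-2u$ and $\mu-\ell_i+2u$; each is at least $\mu-\ell_i\geq \mu-L\geq \alpha$. For the nodes $A$ and $B$, which are visited many times, I will select two visit times one round apart: for instance at node $A$ the visit at time $0$ paired with the visit at time $\mu+\ell_1$ gives cyclic gaps $\mu+\ell_1$ and $\mu-\ell_1$, and $\mu-\ell_1\geq \mu-L\geq \alpha$. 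An analogous choice works at $B$.

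With every point of $Q$ visited at $k=2$ times cyclically separated by at least $\alpha$, Theorem~\ref{k-tour}(ii) applied with tour length $2\mu=k\mu$ yields $V=\alpha/\mu$, with the randomized periodic extension of $S_2$ and the uniform attack strategy optimal. The main subtlety is the verification at the nodes: successive node-visits of $S_2$ can be as close as $\ell_i+\ell_j$ for adjacent arcs, which may be smaller than $\alpha$, so one must pick non-consecutive visits, one from each round, to exploit the $\mu$-shift structure. The regular-point case, by contrast, follows immediately from that shift.
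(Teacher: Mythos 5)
Your proof is correct and follows essentially the same route the paper intends: the theorem is stated there as an ``easy generalization'' of the $Q_5$ example, whose double cover $[1,2',3,4',5,1',2,3',4,5']$ is precisely your alternating-direction tour, and like the paper you conclude via Theorem~\ref{k-tour}(ii) with $k=2$ and $L=2\mu$. Your explicit computations--the cyclic gaps $\mu\pm(\ell_i-2u)$ at regular points and the choice of two node visits one round apart with gaps $\mu\pm\ell_1$ (matching the paper's pairs separated by $14$ in $Q_5$)--simply fill in the details the paper leaves implicit.
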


\begin{proof} Label the arcs between the two nodes $A$ and $B$ as $a_{1},\dots ,a_{n}$, in
order of increasing length $b_{1}\leq b_{2}\leq \dots \leq b_{n}$ where $b_{j}$ is the length of arc $a_{j}$ and $b_{n}=D$. We note that since the girth is given by $g=b_{1}+b_{2}$, Theorem~\ref{theorem:network_noleaves} says that $V=\alpha /\mu $ for $\alpha \leq g=b_{1}+b_{2}$. We have to establish the stronger result that $V=\alpha /\mu $ for $\alpha \leq b_{1}+b_{2}+\dots +b_{n-1}=\mu -D$. Following the construction of $S$ for $Q_{5}$ given above, we define a double tour $S$ of $Q.$ Let $j$ denote the traversal of arc $a_{j}$ from $A$ to $B$ and $j^{\prime }$ denote the traversal of arc $a_{j}$ from $B$ to $A$. Let $S$ be defined by the arc sequence $\left[ 1,2^{\prime },3,4^{\prime},\dots ,\left( n-2\right) ,\left( n-1\right) ^{\prime },n,1^{\prime},2,\dots ,n-1,n^{\prime }\right]$. Returns to any regular point $x$ of $Q$ occur after traversing $n-1$ of the arcs once. So the shortest return occurs when the arc not traversed is the longest one, namely arc $a_{n}$ of length $b_{n}=D.$ So the shortest return time under $S$ to any regular point is given by $b_{1}+b_{2}+\dots +b_{n-1}=\mu -D.$ So the double tour $S$ reaches every regular point $x$ twice at times separated by at least time $\mu -D$. So if $\alpha \leq \mu -D$ it reaches every regular point $x$ twice at times separated by at least time $\alpha$. Since the length of $S$ is given by $L=2\mu$, by Theorem~\ref{k-tour}, the value of the game is equal to $\alpha /\mu$.
\end{proof}

We conclude this section with an application of our earlier result on
identifying points.
\begin{example}
\textit{Consider the two networks $Q$ and $Q^{\prime }$ drawn in
Figure~\ref{fig:identifying_points}, with $\alpha =3$. We would like to show that $V\left( Q^{\prime
}\right) =\alpha /\mu =3/6=1/2$. We know from Lemma~\ref{lemma:identifying} that $V\left(
Q^{\prime }\right) \leq \alpha /\mu =1/2$. So we only need $1/2$ as a lower
bound on $V\left( Q^{\prime }\right)$. However we cannot apply Theorem~\ref{theorem:network_noleaves}
because it is not true that $\alpha$ is less than or equal to the girth of $Q'$, which is 2.
However we know either from \cite{Garrec} or from Theorem~\ref{theorem:network_noleaves} (which
applies because $3=\alpha <g=4$) that $V(Q) =\alpha /\mu =1/2$.
So by viewing $Q^{\prime }$ as coming from $Q$ by identifying points $C$ and
$D$, Lemma~\ref{lemma:identifying} gives $V\left( Q^{\prime }\right) \geq V\left(
Q\right) =1/2$.}
\end{example}

\section{Brief Attacks on Arbitrary Networks} \label{sec:arbitrary}

We now extend Theorem~\ref{theorem:network_noleaves} to networks with leaves. We begin with a modified
Patroller strategy based on the tour $S_2$ of Theorem~\ref{thm:opt-tour}.

\begin{definition}
Suppose $S_2$ is a tour given by Theorem~\ref{thm:opt-tour}. We denote by $S_2^{\alpha}$ the tour
that follows the same trajectory as $S_2$ but stops for time $\alpha $
whenever it reaches a leaf node.
\end{definition}

\begin{lemma} \label{lemma:LowerBound_leaves_girth}
Suppose $Q$ is a network with $l \ge 0$ leaf nodes and girth $g$. Then%
\[
V\geq \frac{\alpha }{\mu +l\alpha /2},\text{
for }\alpha \leq g.\text{ }
\]
\end{lemma}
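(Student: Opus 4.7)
The plan is to have the Patroller use the randomized periodic extension $\tilde{S_2^{\alpha}}$ of $S_2^{\alpha}$ and apply Theorem~\ref{k-tour}(i) with $k=2$. Since $S_2$ has length $2\mu$ and $S_2^{\alpha}$ inserts a pause of duration $\alpha$ at each of the $l$ leaf nodes, its period is $L=2\mu+l\alpha$; the $1$-Lipschitz condition permits stopping, so $S_2^{\alpha}$ is a legitimate unit-speed path. Once the hypothesis of Theorem~\ref{k-tour}(i) is verified, the desired bound $V\geq 2\alpha/L=\alpha/(\mu+l\alpha/2)$ follows immediately.

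\paragraph*{Verifying the separation condition.}
For each $x\in Q$ I would exhibit two times at which $S_2^{\alpha}$ is at $x$ whose cyclic gaps mod $L$ are both at least $\alpha$, splitting into three cases. \emph{(i)} If $x$ lies on a non-leaf arc, the two visits by $S_2$ partition the tour into two closed sub-walks from $x$ to $x$; each, after deleting any leaf-arc backtracks, is a circuit in $Q$ (since $S_2$ never backtracks on non-leaf arcs, by Theorem~\ref{thm:opt-tour}), hence of length at least $g\geq\alpha$, as in the proof of Theorem~\ref{theorem:network_noleaves}, and inserting pauses only lengthens both gaps further. \emph{(ii)} If $x$ is a leaf node, take the two endpoints of the pause interval at $x$: one gap is exactly $\alpha$ and the complementary gap is $2\mu+(l-1)\alpha\geq\alpha$, using $2\mu\geq\bar\mu>\alpha$. \emph{(iii)} If $x$ is a regular point on a leaf arc $a$ at distance $s$ from the leaf, the two visits occur on the way in and out of the arc, separated by $2s+\alpha\geq\alpha$; the complementary gap $2(\mu-s)+(l-1)\alpha$ is at least $\alpha$ iff $2(\mu-s)\geq(2-l)\alpha$, which for $l\geq 2$ is immediate from $s\leq\lambda(a)\leq\mu$, and for $l=1$ follows because removing the single leaf arc leaves a connected subnetwork with every node of degree at least $2$, which must contain a circuit of length at most $\mu-\lambda(a)$, giving $\mu-\lambda(a)\geq g\geq\alpha$.

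\paragraph*{Main obstacle.}
The only step requiring any real argument is case (iii) with $l=1$, where the complementary-gap bound is tightest and the hypothesis $\alpha\leq g$ is used structurally (via the fact that the complement of the single leaf arc contains a circuit of length at least $g$) rather than merely numerically. Everything else reduces to the pause construction together with circuit-length reasoning already present in the proof of Theorem~\ref{theorem:network_noleaves}.
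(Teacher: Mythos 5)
Your overall route is exactly the paper's: patrol with the randomized periodic extension of $S_2^{\alpha}$ and invoke Theorem~\ref{k-tour}(i) with $k=2$ and $L=2\mu+l\alpha$. The paper's proof is a one-liner that simply asserts the separation property (remarking parenthetically that the pauses could be replaced by short excursions); your cases (ii) and (iii) verify it correctly, and your $l=1$ subcase of (iii) --- extracting a circuit of length at most $\mu-\lambda(a)$ from the complement of the unique leaf arc, so that $\alpha\leq g\leq\mu-\lambda(a)$ --- is a legitimate detail the paper glosses over. However, your case (i) contains a step that fails. Theorem~\ref{thm:opt-tour} forbids only \emph{immediately consecutive} opposite traversals of a non-leaf arc; it does not prevent $S_2$ from entering a node $y$ along a non-leaf arc $b$, making the out-and-back leaf excursion at $y$, and then retraversing $b$ in the opposite direction. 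Concretely, let $Q$ have nodes $y,z$ joined by three unit arcs $b,c,d$, with a leaf arc $a$ of length $\varepsilon$ attached at $y$ (so $g=2$). The closed walk $b(z\!\to\!y)$, $a$ out, $a$ back, $b(y\!\to\!z)$, $c(z\!\to\!y)$, $d(y\!\to\!z)$, $c(z\!\to\!y)$, $d(y\!\to\!z)$ covers every arc exactly twice and has no consecutive reversal of a non-leaf arc, so it is a legitimate output of Theorem~\ref{thm:opt-tour}. A point $x$ on $b$ at distance $\delta$ from $y$ is then revisited by $S_2$ after time only $2\delta+2\varepsilon$, which is arbitrarily smaller than $g$; and deleting the leaf-arc backtrack leaves an immediate reversal of the non-leaf arc $b$, not a circuit. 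So your claim that each sub-walk, after deleting leaf backtracks, is a circuit of length at least $g\geq\alpha$ is false, and the pre-pause gap need not be at least $\alpha$.

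The conclusion survives, but only because in this bad configuration the sub-walk between the two visits necessarily contains a leaf node, where $S_2^{\alpha}$ pauses for time $\alpha$: the pause is the essential mechanism (exactly what the paper's terse proof relies on), not merely something that ``lengthens both gaps further.'' The repair of your case (i) is a dichotomy: if the sub-walk between consecutive visits to $x$ meets a leaf node, the inserted pause alone makes that gap at least $\alpha$; if it meets no leaf node, then it traverses no leaf arc, all its consecutive arc traversals are distinct, and hence it contains a circuit and has length at least $g\geq\alpha$. With that substitution your proof is complete and, in its treatment of the complementary gaps in cases (ii) and (iii), more careful than the paper's own.
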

\begin{proof}
Tour $S_2^{\alpha}$ takes total time $2\mu +l\alpha$. Note
that every point of $Q$ is visited by $S_2^{\alpha}$ at two times differing by
at least $\alpha$. So by Theorem~\ref{k-tour} part (i) with $k=2$, $L=2\mu +l\alpha$,
we have $V\geq 2\alpha /\left( 2\mu +l\alpha \right)$.
(We observe that instead of stopping for time $\alpha$, the tour $S_2^{\alpha}$
could do anything in this time interval, such as going away from the node a
distance $\alpha /2$ and returning.) 
\end{proof}

\begin{definition}[\textbf{Generalized girth}]
We define the \textbf{generalized girth} $g^{\ast }$ of a network $Q$ by
considering a leaf arc of length $L$ to be a circuit of length $2L$. So $%
g^{\ast }$ is the smaller between (1) the shortest circuit length of $Q$ and (2) twice the length of the shortest leaf arc. \label{def:gen_girth}
\end{definition}

In particular $g^{\ast }\leq \allowbreak g$, with equality if
there are no leaf arcs or if all leaf arcs have length greater than $g/2$. Note that if $\alpha \leq
g^{\ast }$ we know in particular that all leaf arcs have length at least $%
\alpha /2$ and hence Theorem~\ref{thm:independent} applies. Thus we have the following Attacker estimate (upper bound on $V$).

\begin{lemma} \label{lemma:generalised_girth}
Suppose $Q$ is a network with $l \ge 0$ leaf nodes and generalized girth $g^{\ast }$. Then
by adopting the independent attack strategy on the set $I$ of leaf nodes, the Attacker can ensure that the interception probability is less than $\frac{\alpha }{\mu +l\alpha /2}$ for $\alpha \leq g^*$. Hence,
\[
V\leq \frac{\alpha }{\mu +l\alpha /2},~\text{for }\alpha \leq g^{\ast }.
\]
\end{lemma}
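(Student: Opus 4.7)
The plan is to apply Theorem~\ref{thm:independent}, specifically its last assertion, with $I$ taken to be the set of $l$ leaf nodes of $Q$. To invoke that last assertion, two hypotheses must be checked: first, that $I$ is an independent set (pairwise distances at least $\alpha$); second, that every leaf arc has length at least $\alpha/2$. Both will follow from $\alpha \leq g^{\ast}$ essentially by unpacking Definition~\ref{def:gen_girth}.

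For the second hypothesis, Definition~\ref{def:gen_girth} gives $g^{\ast}\leq 2\lambda(a)$ for every leaf arc $a$, so $\lambda(a)\geq g^{\ast}/2\geq \alpha/2$. For the independence of $I$, take any two distinct leaf nodes $u,v$, with leaf arcs $a_u,a_v$. Because each leaf node has degree $1$, it is a boundary point of its unique leaf arc, so any path from $u$ to $v$ must traverse both $a_u$ and $a_v$ in their entirety; consequently $d(u,v)\geq \lambda(a_u)+\lambda(a_v)\geq \alpha/2+\alpha/2=\alpha$.

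Given these two facts, Theorem~\ref{thm:independent} applied to the set of leaf nodes directly yields $V\leq \alpha/(\mu+l\alpha/2)$, and the bound is achieved by the independent attack strategy on $I$. The only mild subtlety is that Theorem~\ref{thm:independent}'s last assertion is stated under the strict inequality ``leaf arcs have lengths exceeding $\alpha/2$'', whereas at the boundary $\alpha=g^{\ast}=2\lambda(a)$ for some leaf arc $a$ we only have equality. This is harmless: the radius-$\alpha/2$ balls around distinct leaf nodes remain pairwise disjoint (by the independence argument above), each still sits inside its own leaf arc with measure exactly $\alpha/2$, so $\lambda(W)=l\alpha/2$ and $\lambda(W^{c})=\mu-l\alpha/2$, and the computation in the proof of Theorem~\ref{thm:independent} goes through verbatim. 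The only ``hard'' step is really the distance computation ensuring independence of $I$, which is immediate once one observes that any path between two leaves must leave each leaf by the unique arc incident to it.
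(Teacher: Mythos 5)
Your proof is correct and takes essentially the same route as the paper: the paper's own proof simply notes that $\alpha \leq g^{\ast}$ forces every leaf arc to have length at least $\alpha/2$ (by Definition~\ref{def:gen_girth}) and then invokes the last assertion of Theorem~\ref{thm:independent} with $I$ the set of leaf nodes. Your additional verifications---the distance bound $d(u,v)\geq \lambda(a_u)+\lambda(a_v)\geq \alpha$ establishing independence of $I$, and the check that the boundary case $\alpha = g^{\ast} = 2\lambda(a)$ (where the paper's hypothesis of lengths \emph{exceeding} $\alpha/2$ holds only with equality) still yields $\lambda(W)=l\alpha/2$ and hence the same bound---are sound and in fact patch a small looseness that the paper glosses over.
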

\begin{proof}
As noted above, the assumption on $\alpha $ ensures that all leaf arcs have
length at least $\alpha /2$, so the result follows from Theorem~\ref{thm:independent}. 
\end{proof}

Since $g^{\ast }\leq g$, Lemmas \ref{lemma:LowerBound_leaves_girth} and \ref{lemma:generalised_girth} apply when $\alpha \leq g^{\ast }$ and
hence we have the following extension of Theorem~\ref{theorem:network_noleaves} to networks with leaf arcs.

\begin{theorem} \label{thm:gen-girth}
If $Q$ is a network with $l \ge 0$ leaf nodes and generalized girth $g^{\ast },$ then%
\[
V=\frac{\alpha }{\mu +l\alpha /2},~\text{for }\alpha \leq g^{\ast }.
\]%
For the Patroller, an optimal strategy is $S_2^{\alpha}$ as defined above. For the
Attacker, an optimal strategy is the independent attack strategy, taking $I
$ to be the independent set of leaf nodes.
%\label{thm:Q_leaves_g*}
\end{theorem}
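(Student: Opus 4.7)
The plan is to observe that Theorem~\ref{thm:gen-girth} is essentially a packaging of the two lemmas immediately preceding it, together with the elementary inequality $g^{\ast} \le g$ that comes straight from Definition~\ref{def:gen_girth}. Once this inequality is in hand, the hypothesis $\alpha \le g^{\ast}$ automatically implies $\alpha \le g$, so the two lemmas can be applied simultaneously and their bounds sandwich $V$.

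Concretely, I would first invoke Lemma~\ref{lemma:LowerBound_leaves_girth}. Its hypothesis is $\alpha \le g$, which holds since $\alpha \le g^{\ast} \le g$, and it yields
\[
V \ge \frac{\alpha}{\mu + l\alpha/2},
\]
realized by the Patroller playing the randomized periodic extension of the tour $S_2^{\alpha}$ (recall that $S_2^{\alpha}$ has total length $2\mu + l\alpha$ and revisits every point of $Q$ after a gap of at least $\alpha$, so Theorem~\ref{k-tour}(i) with $k=2$ supplies the lower bound).

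Next, I would apply Lemma~\ref{lemma:generalised_girth} with the same hypothesis $\alpha \le g^{\ast}$. The key thing to check is that this hypothesis really is strong enough to legitimize the independent attack strategy on $I$: by definition of $g^{\ast}$, every leaf arc has length at least $g^{\ast}/2 \ge \alpha/2$, so the set $I$ of leaf nodes is indeed an independent set in the sense of Definition 3, and the sharper form of Theorem~\ref{thm:independent} (with $\lambda(W) = l\alpha/2$) gives
\[
V \le \frac{\alpha}{\mu + l\alpha/2},
\]
realized by the Attacker's independent attack strategy based on $I$.

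Combining the two inequalities yields the claimed value and simultaneously certifies the two named strategies as optimal. There is essentially no obstacle left to overcome at this stage; the only point in the write-up that deserves an explicit sentence is the reconciliation of the two hypotheses ($\alpha \le g$ versus $\alpha \le g^{\ast}$) via the observation $g^{\ast} \le g$, so that one hypothesis on $\alpha$ powers both halves of the sandwich.
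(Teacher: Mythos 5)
Your proposal is correct and matches the paper's own (implicit) proof exactly: the paper derives Theorem~\ref{thm:gen-girth} by noting $g^{\ast}\leq g$, so that $\alpha\leq g^{\ast}$ lets Lemma~\ref{lemma:LowerBound_leaves_girth} (lower bound via $S_2^{\alpha}$ and Theorem~\ref{k-tour}) and Lemma~\ref{lemma:generalised_girth} (upper bound via the independent attack strategy and Theorem~\ref{thm:independent}) apply simultaneously and sandwich $V$. Your added check that $\alpha\leq g^{\ast}$ forces every leaf arc to have length at least $\alpha/2$, making $I$ independent with $\lambda(W)=l\alpha/2$, is precisely the observation the paper makes just before stating Lemma~\ref{lemma:generalised_girth}.
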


Since $g^*$ is always positive, Theorem~\ref{thm:gen-girth} gives the solution of the game for some positive values of $\alpha$ on any network.

It is useful for later comparisons to specialize this result to trees.

\begin{corollary}
If $Q$ is a tree with $l$ leaf arcs, then%
\begin{enumerate}
\item[(i)] $V \geq \frac{\alpha}{\mu +l\alpha /2}$,
\item[(ii)] with equality if all leaf arcs have length at least $\alpha /2$.
\end{enumerate}
\label{cor:trees}
\end{corollary}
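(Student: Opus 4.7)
The plan is to derive both parts directly from the results already established for general networks, by specializing to the tree setting. The key observation is that a tree contains no circuits, so its girth is $g = \infty$, and its generalized girth reduces to $g^{\ast} = 2L_{\min}$, where $L_{\min}$ denotes the length of the shortest leaf arc of $Q$.

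For part (i), I would apply Lemma~\ref{lemma:LowerBound_leaves_girth}, which gives $V \geq \alpha / (\mu + l\alpha/2)$ whenever $\alpha \leq g$. Since $g = \infty$ for a tree, the hypothesis is vacuously satisfied for every admissible $\alpha$ (that is, for $0 < \alpha < \bar{\mu} = 2\mu$), and the bound follows immediately. The Patroller strategy realizing this bound is $\tilde{S}_2^{\alpha}$, the randomized periodic extension of the double-cover tour that pauses for time $\alpha$ at each leaf node.

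For part (ii), I would invoke Theorem~\ref{thm:gen-girth}. The hypothesis that every leaf arc has length at least $\alpha/2$ means $L_{\min} \geq \alpha/2$, so $g^{\ast} = 2L_{\min} \geq \alpha$; thus Theorem~\ref{thm:gen-girth} applies and yields equality $V = \alpha/(\mu + l\alpha/2)$. The matching Attacker strategy is the independent attack strategy on the set $I$ of leaf nodes, which is well-defined since the leaf-arc length condition guarantees that distinct leaves are at distance at least $\alpha$.

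There is no substantive obstacle here; the only mild subtlety is unpacking Definition~\ref{def:gen_girth} in the tree case, where the circuit contribution to $g^{\ast}$ is absent and everything hinges on the leaf-arc lengths. The condition in part (ii) is precisely what is needed to translate $g^{\ast} \geq \alpha$ into the equality regime of Theorem~\ref{thm:gen-girth}.
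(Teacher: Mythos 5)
Your proposal is correct, and for part (ii) it is word-for-word the paper's own argument: trees have no circuits, so $g^{\ast}$ reduces to twice the shortest leaf-arc length, the hypothesis gives $\alpha \leq g^{\ast}$, and Theorem~\ref{thm:gen-girth} yields equality (your side remark that leaf arcs of length at least $\alpha/2$ make the leaf nodes an independent set is also sound, since two leaf nodes are separated by at least the sum of two leaf-arc lengths). For part (i) you take a slightly different route: you cite Lemma~\ref{lemma:LowerBound_leaves_girth} and observe that its hypothesis $\alpha \leq g$ is vacuous because $g = \infty$ for trees under the paper's convention. That is formally valid and more economical than what the paper does. The paper instead re-derives (i) directly: it considers the patrol $S_2^{\alpha}$, notes the tree-specific fact that between any two visits of $S_2^{\alpha}$ to a given point a leaf node is visited (where the tour pauses for time $\alpha$), and then applies Theorem~\ref{k-tour}(i) with $k=2$ and $L = 2\mu + l\alpha$. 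The substance is identical -- Lemma~\ref{lemma:LowerBound_leaves_girth} is proved by exactly this mechanism -- but the paper's direct argument has the virtue of making explicit why every return time of $S_2^{\alpha}$ on a tree is at least $\alpha$ regardless of $\alpha$, a point the lemma's own terse proof (which asserts the $\alpha$-separation without spelling out the circuit-versus-leaf dichotomy) leaves implicit in the tree case. In short: your citation-level shortcut buys brevity at the cost of resting on that implicit step; the paper's version is self-contained for trees. No gap either way.
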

\begin{proof}
To establish (ii), note that trees have no circuits, so the
generalized girth $g^{\ast }$ is twice the length of its smallest leaf arc,
so by assumption,  $\alpha \leq g^{\ast}$. The result now follows from
Theorem~\ref{thm:gen-girth}.  For (i), consider the patrol $S_2^{\alpha}$.
Note that between any two visits by $S_2^{\alpha}$ to a
point of $Q$, a leaf node is visited. Hence the return times exceed the time
$\alpha $ that $S_2^{\alpha}$ stops at that node, and the result follows from
Theorem~\ref{k-tour}(i) with $k=2$ and $L=2\mu+l\alpha$. 
\end{proof}

%It is useful to observe that for trees the tour $S$ of Theorem \ref{thm:opt-tour} is a CPT,
%so the optimal patrol $S^{\ast }$ given by Theorem \ref{thm:gen-girth} is simply a CPT which
%stops for time $\alpha $ whenever it reaches a leaf node.

For example, consider the tree $Q$ depicted in Figure~\ref{fig:dog-tree}. The number of leaf arcs is $l= 5$, the generalized girth is $g^*= 2$ and total length is $\mu= 9$, so by Theorem~\ref{thm:gen-girth}, the value of the game is $\alpha/(9 + 5\alpha/2)$ for $\alpha \le 2$. We will later solve the game for $\alpha \le 4$, using Theorem~\ref{theorem:trees}.

\begin{figure}
[H]
\center
  \includegraphics[scale=0.2]{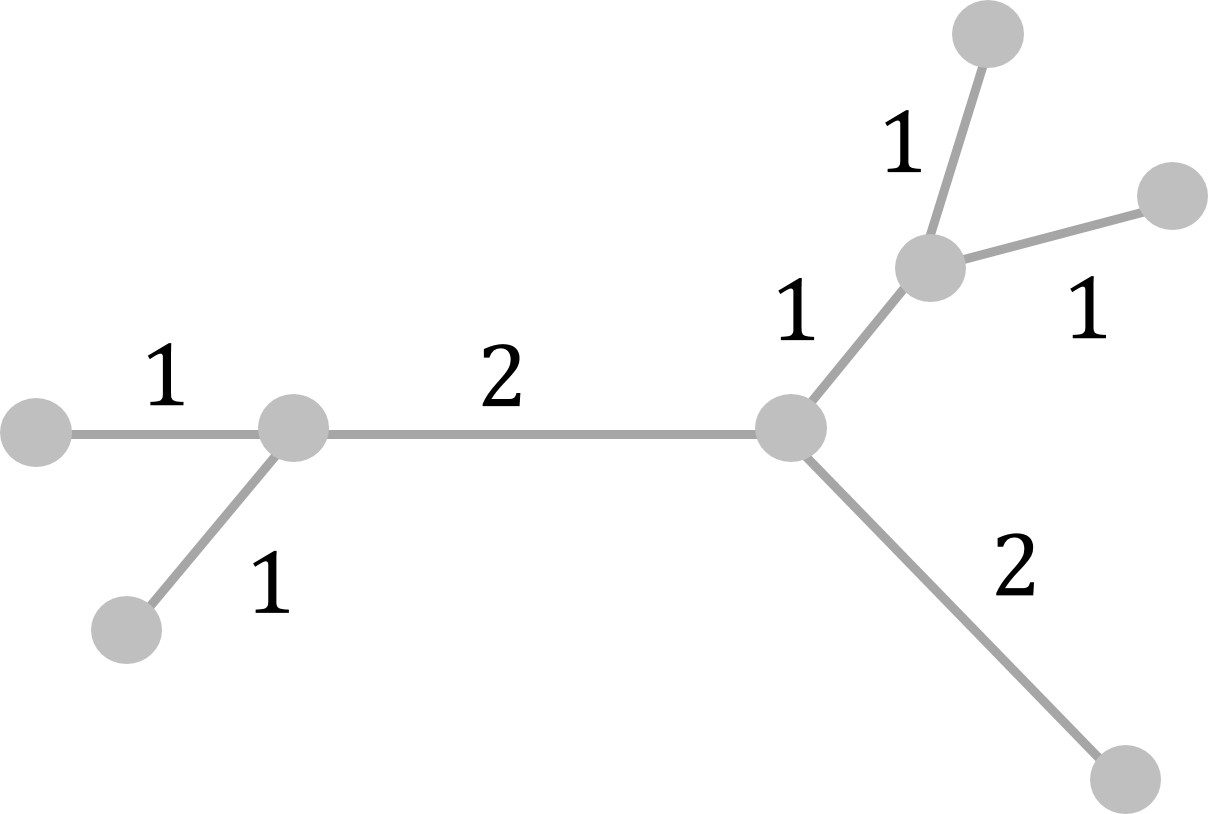}
  \caption{The tree $Q$.}
  \label{fig:dog-tree}
\end{figure}

\section{Solving the Game for Trees}
\label{sec:trees}

In Corollary~\ref{cor:trees} we gave some preliminary results for trees. Lemma~\ref{lemma:LowerBound_leaves_girth} gave a lower bound on the value of the game based on the Patroller strategy $S_2^{\alpha}$. Furthermore, for $\alpha \leq g^*$, where $g^*$ is the generalized girth, we showed in Theorem~\ref{thm:gen-girth} that the independent attack strategy ensures that this lower bound is tight. Note that for a tree, $g^*$ is twice the length of the shortest leaf arc. In this section, we extend these results and give optimal Patroller and Attacker strategies for some values of $\alpha$ which are greater than $g^*$. We start by defining the {\em extremity set} $E$, a subset of $Q$ that is essential in describing optimal Patroller and Attacker strategies.

\subsection{The Extremity Set $E$}
\label{sec:set_R}

The relationship between the network $Q$ and the duration $\alpha$ of the attack interval determines the type of optimal player strategies. In this section we define the extremity set $E$ that helps us explore this relationship for trees.

If $B$ is a set of points then we denote by $\bar{B}$ the topological closure of $B$. If $Q$ is a tree network, then its minimum tour time is $2 \mu$, as every arc must be traversed twice. If $x$ is a regular point of tree network $Q$, then $Q-\{x\}$ has two connected components $Q_1=Q_1(x)$ and $Q_2=Q_2(x)$, whose lengths satisfy $\lambda(Q_1) + \lambda(Q_2) = \lambda(Q)= \mu$. We introduce a subset $E$ of $Q$ called the {\em extremity set}.

\begin{definition}[\textbf{The extremity set $E$}]
Suppose $Q$ is a tree. The extremity set $E \equiv E(Q,\alpha)$ is defined as the set of all regular points $x \in Q$ such that
\begin{equation}
\min_{i=1,2} \lambda(Q_i(x)) < \alpha /2.
\label{eq:R_def}
\end{equation}
\end{definition}

Note that $\min_{i=1,2} \lambda(Q_i)  \leq \mu /2$ and if additionally $\mu < \alpha$ then (\ref{eq:R_def}) holds for all regular points, which implies that $\bar{E} = Q$. The extremity set $E$ consists of regular points whose minimum return time during a CPT is less than the attack duration $\alpha$. It can be partitioned into maximal connected sets that we call \emph{components} of $E$ and we denote by $E_j$.

\begin{example}
\textit{We illustrate the extremity set $E$ on the tree network of Figure~\ref{fig:dog-tree} that has $\mu=9$. Figure~\ref{fig:dog-tree-vs-alpha} shows how $E$ changes for increasing values of $\alpha$ on this network. As $\alpha$ increases the components grow starting from points near the five leaf nodes of the tree. Initially there are five components (cases $\alpha = 1,2,3,4$); but eventually points near non-leaf nodes become members of $E$ and the number of components increase to seven (cases $\alpha = 5,6,7,8$). Note that in case $\alpha = 8$ the closure $\bar{E}$ of $E$ is equal to the whole network. The results from the previous sections (Theorem~\ref{thm:gen-girth}, Corollary~\ref{cor:trees}) solve the game for cases $\alpha \leq g^* = 2,$ but in this section we extend the results to cover all cases of $\alpha \leq 4.$}

\begin{figure}[H]
    \centering
    \includegraphics[width = 4in]{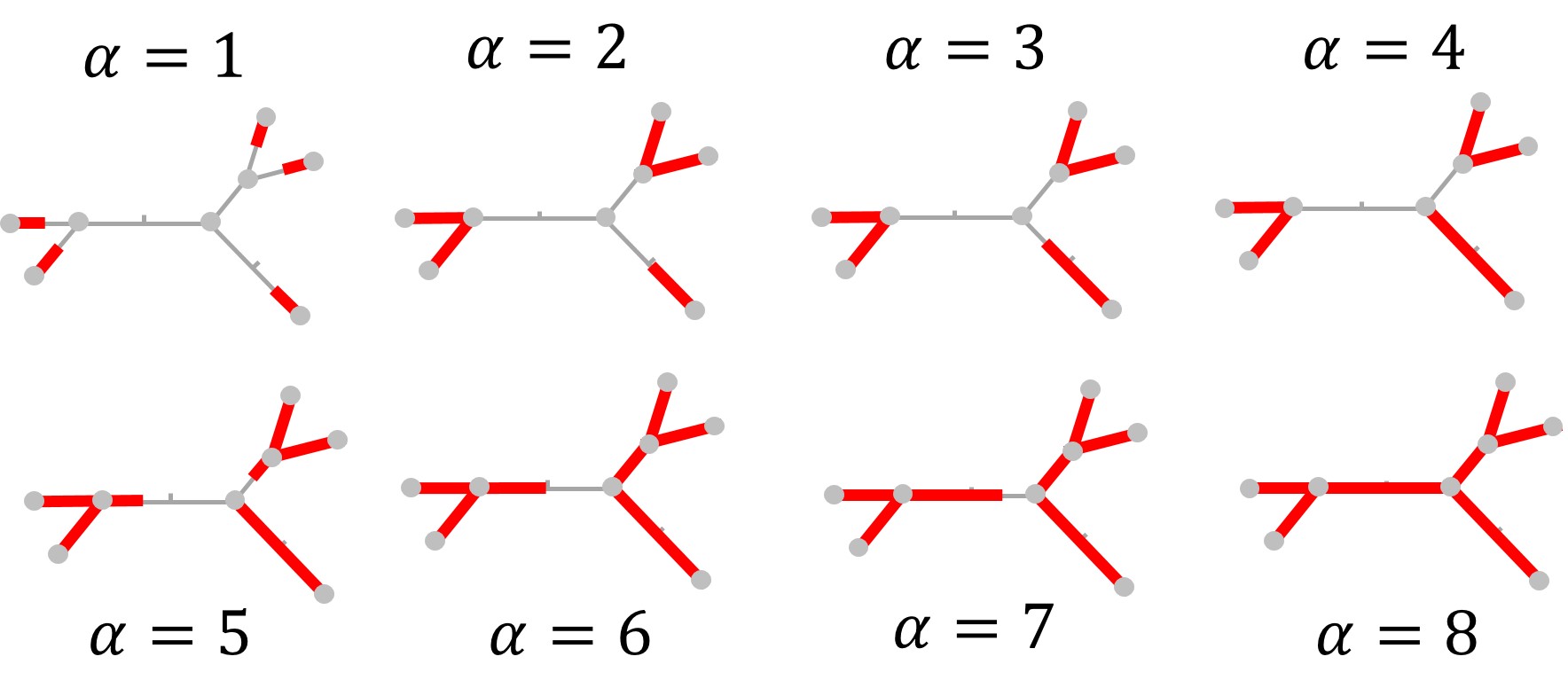}
    \caption{The extremity set $E(Q,\alpha)$, shown in thick (red) lines, for the tree $Q$ of Figure~\ref{fig:dog-tree} and $\alpha=1,\ldots,8$.}
    \label{fig:dog-tree-vs-alpha}
\end{figure}
\label{ex:dog_tree_vs_alpha}
\end{example}

\begin{example}
\textit{Figure~\ref{fig:tree_with_small_leaves} depicts a star network. The extremity set $E$ is depicted by red thick lines for attack time $\alpha$, if the lengths of $AD$ and $AF$ are each greater than $\alpha/2$ and those of $AB$ and $AC$ are each less than or equal to $\alpha/2$. Note that the nodes indicated by the small disks are not part of $E$.  Here, $E$ decomposes into four components: $(A,B)$, $(A,C)$, $(D,G)$, $(F,H)$. We claim that $\lambda(DG) = \lambda(FH) =  \alpha/2$; this is because on leaf arc AD (similarly for AF) if $\lambda(DG) < \alpha/2$ there would be a point X on the right of G whose distance from D would be $< \alpha/2$, implying $\lambda(DX) < \alpha/2$ and thus contradicting $X \notin E$. Similarly, if $\lambda(DG) > \alpha/2$ there would be a point X on the left of G where $\lambda(DX) > \alpha/2$ contradicting $X \in E$. Thus, components $E_j$ that are strict subsets of a leaf arc and whose closure contains the leaf node will have length $\alpha/2$. However, components $E_j$ whose closure is the entire leaf arc (like AB and AC) must have length $\leq \alpha/2$; if they had length $> \alpha/2$ then there would be point X on the component AB near node A where $\lambda(BX) > \alpha/2$ contradicting $X \in E$.}

\begin{figure}[H]
    \centering
    \includegraphics[width = 4in]{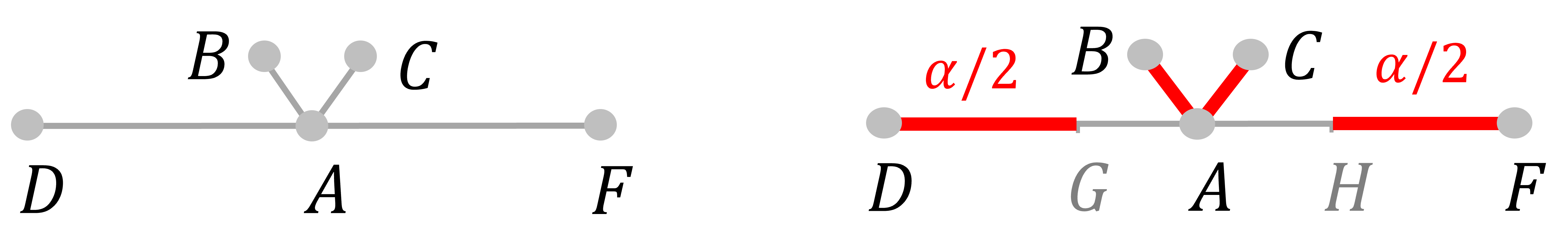}
    \caption{A tree, with its extremity set $E$ in thick red.}
    \label{fig:tree_with_small_leaves}
\end{figure}
\label{ex:tree_with_small_leaves}
\end{example}

\subsection{The $E$-patrolling Strategy $S^E$ for Trees}
\label{sec:trees_patroller}

We will see that for some trees, the uniform CPT strategy is still optimal for the Patroller, but its optimality depends on the size of the attack duration, $\alpha$. As mentioned earlier, for a tree a CPT is simply any depth-first search which returns to its start point after completing its search, so that $\bar{\mu}=2\mu$; every point of the tree except the leaf nodes is visited at least twice by a CPT. This means the leaf nodes and regular points near them are left ``less protected'' by a uniform CPT than the other points, and for sufficiently small values of $\alpha$, there will be points in the tree whose two closest visit times (modulo $\bar{\mu}$) are at least time $\alpha$ apart, meaning that they are, in a sense ``twice as protected'' as the leaf nodes. (In all that follows, arithmetic on time will be performed modulo the length of the tour in question).

This observation motivates the introduction of a new Patroller strategy $S^E$ for trees that we call the $E$-patrolling strategy. We construct it in such a way that each point is visited at least twice at times that differ by at least $\alpha$, and then we use Theorem~\ref{k-tour}, part (i) to obtain a lower bound on the value. To describe the strategy, we use the extremity set $E \equiv E(Q,\alpha)$ that we defined earlier; in particular, we use the closure $\bar{E}$ of $E$ and its components $\bar{E}^1,\ldots,\bar{E}^k$, each of which is a subtree of $Q$. We have $\lambda(\bar{E}) = \lambda(E)$ but by using the components of $\bar{E}$ rather than the components of $E$, we include the nodes and thereby unite adjacent components of $E$ into a single component of $\bar{E}$. For example, in Figure~\ref{fig:tree_with_small_leaves} there are four components of $E$ but only three components of $\bar{E}$, since in $\bar{E}$ the lines AB and AC join to form a single component BAC.

Let $Q$ be a tree with $\bar{E}\neq Q$. We first construct a CPT $S$ with
the additional property that every component $\bar{E}^{j}$ is searched in a single CPT of $\bar{E}^{j}$, which we call $C_j$; note that some CPTs of $Q$ might search different subsets of $\bar{E}^{j}$ during non-consecutive time intervals - we exclude this possibility by construction.

To obtain a CPT of $Q$ with this property, we begin at any regular point not in $\bar{E}$ and go in either direction.
When arriving at any node, we leave by a passage not already traversed, if
there is such a passage. (This is the usual depth-first construction and
ensures we obtain a CPT.) Furthermore, if the node belongs to some component
$\bar{E}^{j}$ and there are untraversed passages staying in that component,
we take one of these. For example, in Figure~\ref{fig:tree_with_small_leaves} if we start somewhere on $GA$ going
right,  and tour the leaf arc to $B$ from $A$, we must then take the passage to $%
C$ (staying in component BAC) rather than the other untraversed passage out
of $A$ going to $F$. This rule ensures that the CPT say $ABAFACADA$ (in which the component $BAC$ of $\bar{E}$ is not traversed in a single CPT of $BAC$) will not be
constructed, but rather one like $[ABACA]FADA$, where the bracketed
expression is a CPT of the component $BAC$.

Then we make two types of additions at every component. If $\lambda \left(
\bar{E}^j\right) \geq \alpha /2$, we follow the  CPT $C_{j}$ of $\bar{E}^{j}$ in $S$ by
another identical one, before continuing with $S$. Note that this local CPT
takes time $\geq \alpha $, so the time between the first and second CPT of $%
\bar{E}^j$ reaching any (regular) point is at least $\alpha $.

%If $\lambda \left( \bar{R}^j\right) <\alpha /2$ we wait until $S$ comes back to $\bar{R}^j$
%after the first occurrence of $C_{j}$ in $S$, and then insert a second $C_{j}$. To make this idea precise, let $[t_1,t_2]$ be the time interval during which $S$ tours $\bar{R}^j$ so that $S(t_1)=S(t_2)$ and $t_2-t_1=2\lambda(\bar{R}^j)$. We have $\alpha > t_2-t_1$. In this case, we cannot simply tour $\bar{R}^j$ twice in succession, because some points in $\bar{R}^j$ will not be visited at two times that are at least time $\alpha$ apart. Let $x = S(t_1) = S(t_2)$, we claim that $x$ is a node of degree at least $3$. For suppose not, so that $x$ is on the interior of an arc, and let $x' \notin R$ be on the same arc such that
%$x' = S\left(t_1-\epsilon\right) = S\left( t_2+\epsilon\right)$ with $\varepsilon:=d(x,x') < \alpha/2 - \lambda(\bar{R}^j)$. Then the length of the set $S\left( [t_1-\epsilon, t_2+\epsilon] \right)$ is  $\lambda(\bar{R}^j) + \epsilon$, which is less than $\alpha/2$, and since this is a component of $Q-x’$, this contradicts $x’ \notin R$, by definition of $R$. So $x$ has degree at least $3$, and thus $Q-x$ has at least three components. All the components of $Q-x$ must have length at least $\alpha/2$ except $\bar{R}^j$, otherwise they would be contained in $\bar{R}^j$. So the next time after $t_2$ that $S$ arrives at $x$ is $t_3 \geq t_2 + \alpha$, and the next time after $t_3$ that $S$ arrives at $x$ is at least $t_3 + \alpha$.  Then $S$ is updated by adding another tour $C_{j}$ at time $t_3$.

If $\lambda \left( \bar{E}^j\right) <\alpha /2$ we wait until $S$ comes back to $\bar{E}^j$
after the first occurrence of $C_{j}$ in $S,$ and then insert a second $C_{j}.$ Let $[t_1,t_2]$ be the time interval during which $S$ tours $\bar{E}^j$ so that $S(t_1)=S(t_2)$ and $t_2-t_1=2\lambda(\bar{E}^j).$ We have $\alpha > t_2-t_1.$ In this case, we cannot simply tour $\bar{E}^j$ twice in succession, because some points in $\bar{E}^j$ will not be visited at two times that are at least time $\alpha$ apart. Let $x = S(t_1) = S(t_2)$, and we claim that $x$ is a (non-leaf) node of the network. For suppose not, so that $x$ is a regular point, and let $x' \notin E$ be on the same arc with $\varepsilon:=d(x,x') < \alpha/2 - \lambda(\bar{E}^j)$. Then the length of $S\left( (t_1-\varepsilon, t_2+\varepsilon) \right),$ which is  $\lambda(\bar{E}^j) + \varepsilon,$ is less than $\alpha/2.$ The set $S\left( (t_1-\varepsilon, t_2+\varepsilon) \right)$ is a component of $Q-x’$ and by the definition of $E,$ since the smaller component of $Q-x'$ has length less than $\alpha/2,$ we have $x' \in E,$ a contradiction. So $x$ is a non-leaf node, and thus $Q-x$ has at least three components. If any component $A$ of $Q-x$ has length less than $\alpha/2,$ then its closure $\bar{A},$ which contains $x,$ must be a subset of $\bar{E},$ and hence of $\bar{E^j}$ (since $x \in \bar{E^j}$). Hence, all components of $Q-x$ that are disjoint from $\bar{E^j}$ must have length at least $\alpha/2.$
%All the components of $Q-x$ must have length at least $\alpha/2$ except $\bar{E}^j$, since any component with length at most $\alpha/2$ would be a subset of $\bar{E}$, and could not be disjoint from $\bar{E}^j$.
So the next time after $t_2$ that $S$ arrives at $x$ is $t_3 \ge t_2 + \alpha$, and the next time after $t_3$ that $S$ arrives at $x$ is at least $t_3+\alpha$. Then $S$ is updated by adding another tour of $C_j$ at time $t_3$.

Observe that each additional local CPT of $\bar{E}^j$ takes time $2\lambda
\left( \bar{E}^j\right) $, so the total length of the resulting tour $S^E$ is $2\mu +2\left( \sum_j \lambda \left( \bar{E}^j\right)
\right) =2\left( \mu + \lambda\left(\bar{E}\right)\right) $ and by construction it reaches every
point of $Q$ at two times separated by at least $\alpha$ (modulo the length of the tour). Note that if $\bar{E}=Q$,
we simply take $S^E=S$. The optimal periodic strategy is thus $S^E$. For the network of Figure~\ref{fig:tree_with_small_leaves}, taking $S$ as $ABACADAFA$ we could have $S^E=ABACAGD\:[GDG]\:[ABACA]\:HF\:[HFH]\:A$, where the brackets
indicate the three inserted local CPT's of the components of $\bar{E}$. Note
that two of these are inserted right after their first occurrence, but the
third one [ABACA] is inserted nonconsecutively. Our construction would not
work directly on the CPT $ABAFACADA$.

%By construction regular points in $R^c$ are visited twice and points in $R$ are visited four times.
Thus we have established the following result by explicit construction.

\begin{lemma}
	Suppose $Q$ is a tree. Then there is a tour $S^E$, called an {\em $E$-patrolling strategy}, of length $2\left(\mu+ \lambda(E)\right)$ such that every point $x$ of $Q$ is visited at least twice at times that differ by at least~$\alpha$.
	\label{def:tree-patrol}
\end{lemma}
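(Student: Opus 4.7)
The plan is to package the explicit construction already described in the paragraphs before the statement into a verification of three properties of $S^E$: it is a well-defined closed unit-speed path in $Q$, it has length $2(\mu + \lambda(E))$, and every point of $Q$ is visited at least twice at times separated by at least $\alpha$ modulo the tour length. First I would fix a CPT $S$ of $Q$ produced by depth-first search obeying the additional rule that, at any node lying in some component $\bar{E}^j$, an untraversed passage staying inside $\bar{E}^j$ is preferred. Since each $\bar{E}^j$ is itself a subtree, this rule forces $S$ to tour $\bar{E}^j$ over a single contiguous time interval, giving a sub-CPT $C_j$ of $\bar{E}^j$. Then $S^E$ is built by inserting a second copy of each $C_j$ into $S$: immediately after $C_j$ if $\lambda(\bar{E}^j) \geq \alpha/2$, and otherwise at the next time $S$ revisits the entry point of $\bar{E}^j$. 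Each insertion begins and ends at the same point of $S$, so the resulting closed path is a valid unit-speed tour.

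The length computation is direct: the base CPT contributes $2\mu$ and each insertion adds $2\lambda(\bar{E}^j)$, yielding a total of $2\mu + 2\sum_j \lambda(\bar{E}^j) = 2(\mu + \lambda(\bar{E})) = 2(\mu + \lambda(E))$, using $\lambda(\bar{E}) = \lambda(E)$ since closure adds only finitely many nodes. For the separation condition, I would split into two cases. If $x \notin \bar{E}$, then by the definition of $E$, $\min_i \lambda(Q_i(x)) \geq \alpha/2$, so the two visits of $S$ to $x$ are already separated cyclically by at least $\alpha$ in both directions, and inserting additional sub-CPTs only adds time between them. If $x$ lies in some $\bar{E}^j$, then $x$ is visited during each of the two copies of $C_j$, and this pair of corresponding visit times differs by $2\lambda(\bar{E}^j) \geq \alpha$ in the first sub-case, and by at least $2\lambda(\bar{E}^j) + \alpha$ in the second (the gap between consecutive visits of $S$ to the entry point of $\bar{E}^j$ being at least $\alpha$, as shown in the discussion preceding the lemma using the fact that every component of $Q - \bar{E}^j$ has length strictly greater than $\alpha/2$).

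The main obstacle I anticipate is justifying that the depth-first procedure can always be carried out so that each $\bar{E}^j$ is toured contiguously; this rests on the observation that once $S$ first enters $\bar{E}^j$ from outside, the preference rule keeps the tour inside $\bar{E}^j$ until a full CPT $C_j$ has been completed, because within a subtree the depth-first rule of choosing any untraversed adjacent passage always succeeds until every arc of that subtree is covered. A secondary delicacy is the separation condition at points on the boundary of some $\bar{E}^j$, where several visits now coexist; it suffices to exhibit any pair of visits at cyclic distance at least $\alpha$, and the pair produced by the two copies of $C_j$ supplies this.
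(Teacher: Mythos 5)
Your proposal is correct and follows essentially the same route as the paper: the same depth-first CPT with the preference rule forcing each component $\bar{E}^j$ to be toured contiguously as $C_j$, the same two-case insertion scheme (immediate repetition when $\lambda(\bar{E}^j)\geq \alpha/2$, deferred insertion at the next return to the entry point otherwise, justified by the fact that $Q-\bar{E}^j$ has at least two components each of length exceeding $\alpha/2$), and the same length count $2\mu+2\sum_j\lambda(\bar{E}^j)=2(\mu+\lambda(E))$. If anything, you make explicit two points the paper leaves implicit --- the separation check for points $x\notin\bar{E}$ via $\min_i\lambda(Q_i(x))\geq\alpha/2$, and the identification of contiguity of the $C_j$'s as the delicate step --- so the match with the paper's argument is essentially complete.
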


We can obtain a lower bound on the value of the game obtained by using an $E$-patrolling strategy.

\begin{lemma} 	\label{lem:patroller_bound_trees}
	Suppose $Q$ is a tree. Any $E$-patrolling strategy intercepts any attack with probability at least $v^* \equiv \alpha / (\mu + \lambda(E))$.
\label{thm:R-patrolling}
\end{lemma}
\begin{proof}
Follows from Lemma~\ref{def:tree-patrol} and Theorem~\ref{k-tour} part (i) with $k=2$, $S=S^E$, and $L=2\left( \mu +\lambda(E) \right)$. 
	%Suppose the attack takes place at time $t$ at some point $x \in Q$. By Lemma~\ref{def:tree-patrol}, the strategy $S^k$ visits $x$ at two times $t_1,t_2$ that differ by at least $\alpha$. Suppose, without loss of generality, that $t_1=t+\alpha$. Then if the random displacement $\delta$ is contained in $[0,\alpha]$ then $S^*=\tilde{S^k}$ intercepts the attack at time $t+(\alpha-\delta)$. This happens with probability $\alpha/(2\mu+ 2\lambda(R))$.
%	
%	Since $t_2$ differs from $t_1$ by at least $\alpha$, there must be another time interval $I$ of length $\alpha$ disjoint from $[0,\alpha]$ such that if $\delta \in I$ then the attack is intercepted. This also happens with probability $\alpha/(2\mu+ 2\lambda(R))$, so the total probability of interception is at least $\alpha/(\mu+ \lambda(R))$.
\end{proof}

We conjecture the following on trees:

\begin{conj} \label{conj} If $Q$ is a tree network, then for any $\alpha$ the $E$-patrolling strategy is optimal and the value of the game is $V=v^*\equiv \alpha/ (\mu + \lambda(E))$.
\end{conj}

We later confirm the conjecture in some special cases.

Note that when $\alpha \leq g^*$ we have $\lambda(E) = l \alpha/2$, and the result of Lemma~\ref{thm:R-patrolling} becomes the same as the result of Corollary~\ref{cor:trees}. In that case, the patrolling strategy $S_2^{\alpha}$ gives the same lower bound as an $E$-patrolling strategy.

\subsection{The $E$-attack Strategy}
\label{sec:trees_attacker}

In the previous section we showed that on a tree, any $E$-patrolling strategy intercepts any attack with probability at least $v^{\ast }$.
Here, we define the {\em $E$-attack strategy}, whose attacks are intercepted with probability at most $v^{\ast }$ on some trees.
The condition that allows
this strategy to be defined and to be optimal is given in Definition~\ref{def:leaf-cond}. It is useful to note that while for patrolling strategies we looked at the components of the closure $\bar{E}$ of $E$, for the attack strategy given here we look at the components of $E$ itself.

\begin{definition}[\textbf{Leaf Condition}] \label{def:leaf-cond}
Suppose $Q$ is a tree.
We say that $(Q,\alpha) $ satisfies the \textbf{Leaf Condition} if the extremity set $E$ consists of all points on every leaf arc within distance $\alpha /2$ of its leaf node.
\end{definition}

For example, in Figure~\ref{fig:dog-tree-vs-alpha} the cases that satisfy the Leaf Condition are the first four ($\alpha =1,2,3,4$), where $E$ consist of five components; all of these five components are subsets of leaf arcs and they are within $\alpha/2$ from the leaf node. Note that the Leaf Condition implies that every component $E_j$ of $E$ corresponds to a leaf node; this is easy to check in Figure~\ref{fig:dog-tree-vs-alpha}. Cases $\alpha=5,6,7,8$ have seven components; five of these components are subsets of leaf arcs but two of them are subsets of non-leaf arcs and thus $(Q,\alpha)$ does not satisfy the Leaf Condition. (Recall that the extremity set does not contain nodes, thus the nodes separate the components.)

\begin{definition} [\textbf{$E$-attack strategy}]
\label{def:spread_strategy}
Suppose $(Q,\alpha)$ satisfies the Leaf Condition, where $Q$ is a tree. Let $x_{j}$ denote
the leaf node contained in the closure of the component $E_{j}$ of $E$, and let $e_j=\lambda ( E_j) $ and let $M=\max_j \lambda (E_j)  $ be the maximum length of a component of $E$.
We define the \textbf{$E$-attack strategy} as follows:

\begin{enumerate}
\item With probability $\lambda(E^c)/( \mu+\lambda(E))$, attack a uniformly random point of $E^c$ at time $M$.

\item With probability $2e_j/(\mu +\lambda(E))$, attack at leaf node $x_{j}$ at a start time chosen uniformly in the interval $[M-e_j,M+e_j]$.
\end{enumerate}
\end{definition}
Note that the Leaf Condition implies that $\sum_j e_j = \lambda(E)$, therefore the sum of the probabilities from 1. and 2. above sum to $1$. Also, unlike the uniform attack strategy, the $E$-attack strategy is not synchronous. That is, the attack does not start at a fixed, deterministic time.

\begin{example}
\textit{We revisit Figure~\ref{fig:tree_with_small_leaves}, where the leaf arcs have lengths $2,1,6,6$ and $\alpha = 6$. We illustrate the $E$-attack strategy on this star network in Figure~\ref{fig:star_R-attack-strategy2}. Here $\mu=15$; the extremity set $E$ is shown in thick red lines. $E$ consists of four components that are subsets of leaf arcs and whose points are within $\alpha/2$ from the leaf node, thus the Leaf Condition is satisfied. Also, note that $\lambda(E) = 9$ and $\mu+\lambda(E) = 24$. The $E$-attack strategy then attacks as follows: with equal probabilities $6/24$ it attacks at nodes $D$ and $F$ with a starting time chosen uniformly on $[0,6]$; with probabilities $4/24$, $2/24$ it attacks leaf nodes $B$, $C$ with a starting time chosen uniformly on $[1,5]$, $[2,4]$ respectively; with probability $6/24$ it attacks uniformly on set $E^c$ at time $M=3$.}

\begin{figure}[H]
    \centering
    \includegraphics[width = 4.5in]{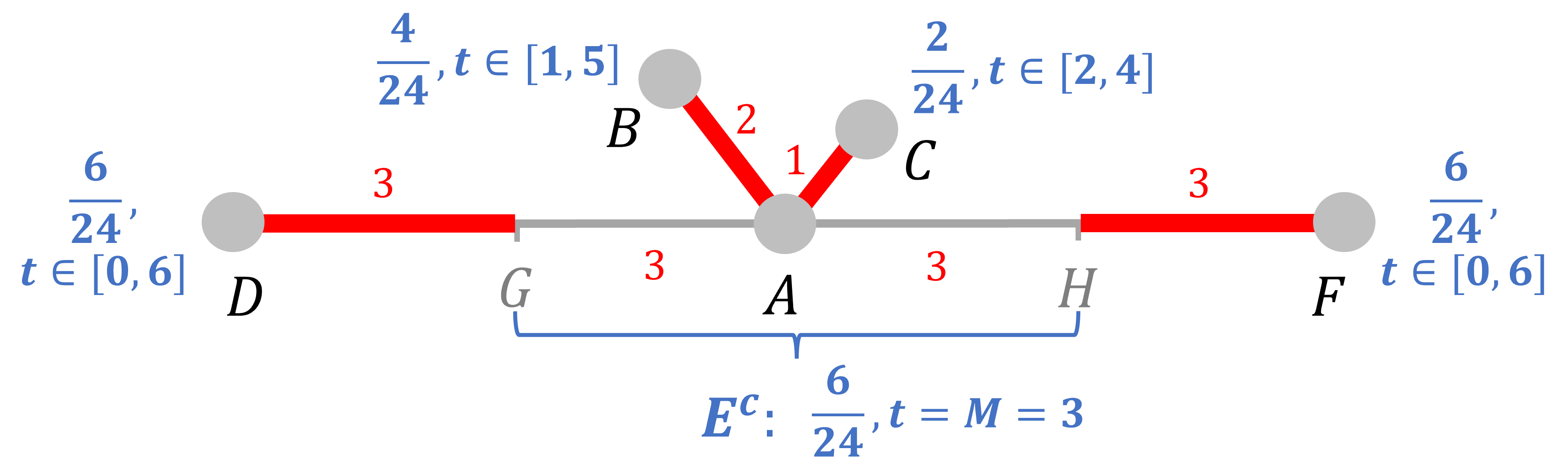}
    \caption{The $E$-attack strategy on an asymmetric star with arcs lengths 2,1,6,6 with $\alpha=6$. The set $E$ is shown in thick red lines.} %Here we have $\mu = 15$, $M=3$, and $v^* = 6/(15+9) = 6/24$. The attack distribution has atoms on $B,C,D,F$ and is uniform on $GH$.}
    \label{fig:star_R-attack-strategy2}
\end{figure}
\label{ex:star_R-attack-strategy2}
\end{example}

We next prove that for trees $Q$, the $E$-attack strategy is optimal if $(Q,\alpha)$ satisfies the Leaf Condition.

\begin{lemma}
Suppose $Q$ is a tree and $(Q,\alpha)$ satisfies the Leaf Condition. Then the $E$-attack strategy is intercepted by any patrol with probability at most $v^*= \alpha/ (\mu + \lambda(E))$.
\label{lemma:spread_trees}
\end{lemma}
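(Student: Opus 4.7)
The plan is to show that, against any patrol $S$, the $E$-attack strategy is intercepted with probability at most $v^{\ast}=\alpha/(\mu+\lambda(E))$. Writing $J=[M,M+\alpha]$ and
\[
V_j=\{\tau\in\mathbb{R}:\,S(t)=x_j\text{ for some }t\in[\tau,\tau+\alpha]\},
\]
a direct expansion of Definition~\ref{def:spread_strategy} gives the interception probability as
\[
\frac{1}{\mu+\lambda(E)}\Bigl(I_0+\sum_j I_j\Bigr),\qquad I_0=\lambda(S(J)\cap E^c),\ I_j=\bigl|V_j\cap[M-e_j,M+e_j]\bigr|,
\]
so it suffices to prove $I_0+\sum_j I_j\le\alpha$.

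Since $S$ is unit speed and $\partial E$ is a finite set (hence contributes measure zero), the length of $S(J)\cap E^c$ is at most the time the patrol spends in $Q\setminus\bar E$: $I_0\le\alpha-T_{\bar E}$, where $T_{\bar E}=|\{s\in J:\,S(s)\in\bar E\}|$. Decompose $T_{\bar E}=T_E+T_{\mathrm{leaves}}+T_{\mathrm{other}}$ into the time the patrol spends in the open set $E$, at a leaf $x_j$, and at a remaining boundary point of $E$ (a point $y_j$ or $v_j$), respectively. The key step is the per-leaf bound $I_j\le T_j$, where $T_j=|\{s\in J:\,S(s)\in E_j\cup\{x_j\}\}|$. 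Since the sets $E_j\cup\{x_j\}$ lie on pairwise-disjoint leaf arcs, $\sum_j T_j=T_E+T_{\mathrm{leaves}}$, and combining with the bound on $I_0$ yields
\[
I_0+\sum_j I_j\le(\alpha-T_{\bar E})+(T_E+T_{\mathrm{leaves}})=\alpha-T_{\mathrm{other}}\le\alpha.
\]

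The per-leaf bound $I_j\le T_j$ rests on the elementary identity, valid for every real $s$ under the Leaf Condition $2e_j\le\alpha$:
\[
\bigl|[s-\alpha,s]\cap[M-e_j,M+e_j]\bigr|=\bigl|[s-e_j,s+e_j]\cap[M,M+\alpha]\bigr|,
\]
verified by a short case analysis on the position of $s$. The same case analysis extends to closed intervals $Z=[a,b]$ in place of the single point $\{s\}$, yielding
\[
\bigl|[a-\alpha,b]\cap[M-e_j,M+e_j]\bigr|\le\bigl|[a-e_j,b+e_j]\cap[M,M+\alpha]\bigr|.
\]
Partition the patrol's time in $E_j\cup\{x_j\}$ into maximal excursions $[t_e^n,t_x^n]$. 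In each excursion that reaches $x_j$, the visit times lie in $[t_e^n+e_j,t_x^n-e_j]$ by unit speed (entering and exiting through the single boundary point $y_j$), so the excursion's contribution to $V_j$ is contained in $[t_e^n+e_j-\alpha,t_x^n-e_j]$. Applying the interval version with $a=t_e^n+e_j$, $b=t_x^n-e_j$ bounds this contribution's intersection with $[M-e_j,M+e_j]$ by $|[t_e^n,t_x^n]\cap[M,M+\alpha]|$. Summing over excursions (union bound on the left, disjointness on the right) gives $I_j\le T_j$.

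The main obstacle is the identity/inequality for shifted intervals, which requires careful matching of centres and half-widths across the four intervals involved and uses $2e_j\le\alpha$ to make the comparison go in the correct direction. A secondary delicate point is using the sets $E_j\cup\{x_j\}$ (rather than the closures $\bar E^j$) in the per-leaf bound, which is needed to avoid double-counting the patrol's time at non-leaf boundary points that may be shared between adjacent $\bar E^j$'s.
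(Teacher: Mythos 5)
Your proposal is correct in its essentials, and since the paper defers its own proof of Lemma~\ref{lemma:spread_trees} to the Online Appendix, the fairest comparison is at the level of method: your global time-accounting argument --- expand the interception probability as $(I_0+\sum_j I_j)/(\mu+\lambda(E))$, charge each attack family to the patrol time it consumes, and show the total charge is at most $\alpha$ --- is the natural continuous analogue of the accounting used in the paper's proof of Theorem~\ref{thm:independent}, and the trapezoid identity $\bigl|[s-\alpha,s]\cap[M-e_j,M+e_j]\bigr|=\bigl|[s-e_j,s+e_j]\cap[M,M+\alpha]\bigr|$ is exactly the structural reason the $E$-attack strategy uses an attack window of half-width $e_j$ at $x_j$. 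I verified your interval inequality by the four-case analysis on which endpoints are active; it holds, but note that it genuinely uses $a\le b$ (for instance, in the case $\min(b,M+e_j)=b$ and $\max(a-\alpha,M-e_j)=M-e_j$ with $a-e_j>M$, one needs $a\le M+e_j$, which follows only from $a\le b\le M+e_j$), so you should state explicitly that it is applied only to excursions that actually reach $x_j$, which forces $t_x^n-t_e^n\ge 2e_j$ and hence $a\le b$. Your observation that one must work with the sets $E_j\cup\{x_j\}$ rather than the closures $\bar{E}^j$ to keep the per-leaf time sets disjoint is exactly right and mirrors the paper's own remark that the attack strategy is built from components of $E$, not of $\bar{E}$.

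There are two edge cases your phrase ``entering and exiting through the single boundary point $y_j$'' does not cover, and both occur for legitimate patrols. First, the patrol may start at time $0$ \emph{inside} $E_j\cup\{x_j\}$, so the first excursion has no entry through $y_j$ and may visit $x_j$ earlier than $t_e^n+e_j$. This is harmless but needs a line: if $S(0)$ is at distance $d_0\le e_j$ from $x_j$, the excursion's contribution to $V_j$ lies in $[d_0-\alpha,\,t_x^1-e_j]$, and since $d_0-\alpha\le e_j-\alpha\le M-e_j$ (using $2e_j\le\alpha\le M+\alpha$), the left end is clamped at $M-e_j$ anyway; a direct computation gives a contribution of at most $\min(t_x^1-M,2e_j)_+\le\min(t_x^1-M,\alpha)_+=\bigl|[0,t_x^1]\cap J\bigr|$, so the per-excursion bound survives. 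Second, an excursion may never exit (patrols are defined on $[0,\infty)$ and may remain in $E_j\cup\{x_j\}$ forever); the interval inequality with $b=+\infty$ still holds, since the right-hand window is then $[\max(t_e^n,M),M+\alpha]$, whose length dominates the left-hand intersection, but this should be said. A smaller point: the ``maximal excursions'' need not be closed intervals, as the preimage of the half-open set $E_j\cup\{x_j\}$ can have components that are open at times when the patrol touches $y_j$; work with the closures of these components, which overlap in at most countably many points, so $\sum_n\bigl|[t_e^n,t_x^n]\cap J\bigr|=T_j$ is preserved. None of this changes your architecture; with these patches the proof is complete.
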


The proof of Lemma \ref{lemma:spread_trees} is in the Appendix. If we combine the results of Lemma~\ref{thm:R-patrolling} and Lemma~\ref{lemma:spread_trees} on patrolling and attack strategies for trees, we obtain the following exact result for the value of the game.

\begin{theorem} \label{theorem:trees}
Suppose $Q$ is a tree and $(Q,\alpha)$ satisfies the Leaf Condition. Then any $E$-patrolling strategy is optimal, the $E$-attack strategy is optimal, and the value of the game is~$V = v^*$.
\end{theorem}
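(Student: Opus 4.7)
The plan is to prove Theorem~\ref{theorem:trees} by straightforward combination of the two previously established one-sided bounds, since essentially all the work has already been done. Specifically, the theorem is asserting that the Patroller's lower bound from Theorem~\ref{thm:R-patrolling} and the Attacker's upper bound from Lemma~\ref{lemma:spread_trees} coincide at the value $v^* = \alpha/(\mu+\lambda(E))$, so the minimax value of the game is pinned down exactly and the two candidate strategies are mutually optimal.

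First I would invoke Theorem~\ref{thm:R-patrolling}, which says that for any tree $Q$, any $E$-patrolling strategy $S^E$ (constructed in Lemma~\ref{def:tree-patrol}) intercepts any attack with probability at least $v^*$. Since this strategy is a mixed strategy available to the Patroller regardless of the Attacker's choice, this yields $V \geq v^*$. Note that this half does not require the Leaf Condition, only that $Q$ be a tree.

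Next I would invoke Lemma~\ref{lemma:spread_trees}, whose hypothesis (the Leaf Condition for $(Q,\alpha)$) is exactly the hypothesis of the theorem. This lemma says that any patrol intercepts the $E$-attack strategy with probability at most $v^*$. Since this strategy is available to the Attacker, it gives $V \leq v^*$. Combining the two inequalities yields $V = v^*$, and it follows immediately from the definition of optimality in a zero-sum game that both the $E$-patrolling strategy and the $E$-attack strategy achieve this value and are therefore optimal.

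The only real obstacle lurking in this argument is hidden inside Lemma~\ref{lemma:spread_trees}, which is deferred to the Online Appendix; the Leaf Condition is what makes the bookkeeping in that lemma work, because it guarantees that every component $E_j$ of $E$ sits at a distinct leaf node with $\sum_j e_j = \lambda(E)$, so that the probabilities in the $E$-attack strategy genuinely sum to one and the time windows at the leaves interact cleanly with any unit-speed patrol. Beyond quoting that lemma, there is nothing left to do in the proof of Theorem~\ref{theorem:trees} itself.
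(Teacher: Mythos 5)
Your proposal is correct and takes essentially the same approach as the paper, which likewise obtains Theorem~\ref{theorem:trees} by simply combining the Patroller's lower bound from Theorem~\ref{thm:R-patrolling} with the Attacker's upper bound from Lemma~\ref{lemma:spread_trees}, the latter being where the Leaf Condition is used. Your observation that the lower bound half holds for all trees without the Leaf Condition is also consistent with the paper's presentation.
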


\begin{example}
\textit{We revisit the network $Q$ from Figure~\ref{fig:star_R-attack-strategy2} with $\alpha=6$ and $\mu=15$. We first consider patrolling strategies. The $S_2^{\alpha}$ patrolling strategy is $ADDABBACCAFFA$, where repeating a node means it stays there for duration $\alpha$; this tour has length $2 \mu + 4 (6) = 54$. From Corollary~\ref{cor:trees} we have $V \geq \frac{\alpha}{\mu +l\alpha /2} = 6/27$. An $E$-patrolling strategy is $ADGDABACABACAFHFA$ with length $2 \mu + 2 \lambda(E) =48$; from Lemma~\ref{thm:R-patrolling} we have $V \geq v^* = \frac{\alpha}{\mu + \lambda(E)} = 6/24$. As we can see, an $E$-patrolling strategy, which is defined only for trees offers an improvement over the $S_2^{\alpha}$ patrolling strategy, which is a more general strategy.}

\textit{Now, we consider attacker strategies. Let $I$ be the set of leaf nodes. The sets $E$ and $W \equiv W(I)$ are shown in Figure~\ref{fig:star_with_W_R} with solid thick red and dashed thick green lines respectively. Note that $(Q,\alpha)$ satisfies the Leaf Condition. The $E$-attack strategy is demonstrated in Figure~\ref{fig:star_R-attack-strategy2} and it gives a lower bound, $v^* = \frac{\alpha}{\mu + \lambda(E)} = 6/24$, from Theorem~\ref{theorem:trees}, which is optimal. The bound given by Theorem~\ref{thm:independent} $\frac{\alpha}{\lambda(W^c)+ l \alpha} = \frac{\alpha}{\mu+ l \alpha/2} = 6/27$ does not hold in this case because $I$ is not an independent set or, equivalently, leaf arcs do not have lengths exceeding $\alpha/2$.}

\begin{figure}[H]
    \centering
    \includegraphics[width = 3in]{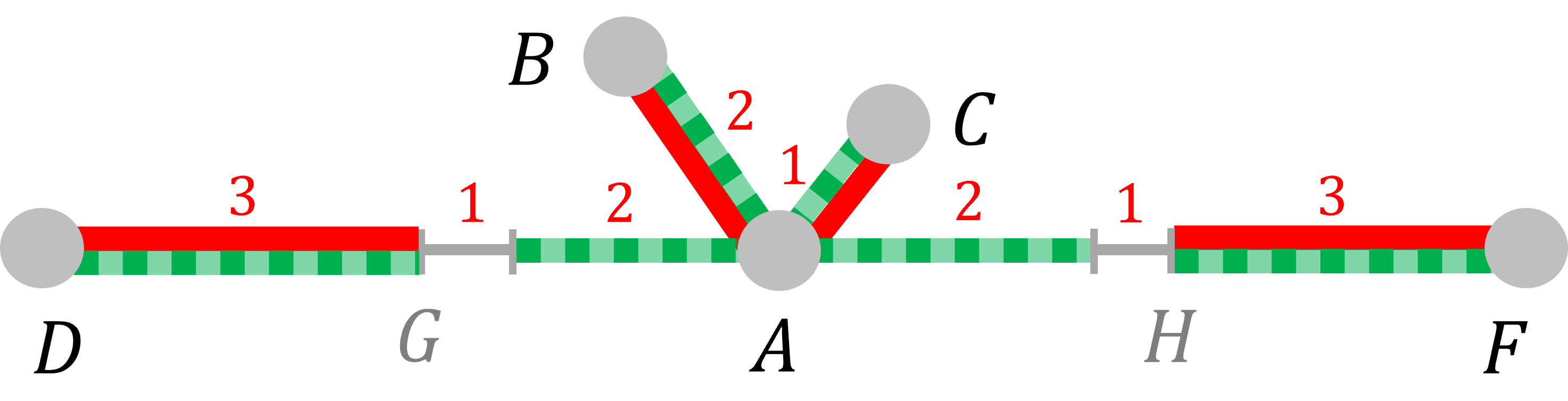}
    \caption{Star with arc lengths 6,6,2,1 and $\alpha = 6$. The solid thick red line is the set $E$ and the thick dashed green line is the set $W \equiv W(I)$, where $I$ is the set of leaf nodes; note that here $I$ is not an independent set.}
    \label{fig:star_with_W_R}
\end{figure}

\label{ex:R-patroller-strategy}
\end{example}

A star is a network consisting entirely of leaf arcs. We call a star {\em balanced} if no arc comprises more than half of its total length; otherwise we say that it is {\em skewed}. It is easy to check that balanced stars satisfy the Leaf Condition. All symmetric stars (whose arcs are all the same length) are balanced. An example of a skewed star is a star with $n$ arcs of length $1$ and one arc of length $x > n$, as shown in Figure~\ref{fig:x-n-star}; the long arc has length $x,$ which is more than half of $\mu = n + x.$

It is also easy to see that if $Q$ is a star (which may be balanced or skewed) whose longest arc has length at most $\alpha/2$, then $\bar{E}=Q$ and hence $Q$ satisfies the Leaf Condition. So Theorem~\ref{theorem:trees} gives the following.

\begin{corollary} \label{cor:stars}
Suppose $Q$ is a star.  Then the $E$-attack strategy and any $E$-patrolling strategy are optimal and the value of the game is $V=v^*= \alpha/ (\mu + \lambda(E))$ if either
\begin{enumerate}
\item[(i)] $Q$ is balanced or
\item[(ii)] $\alpha$ is at least twice the length of the longest arc of $Q$.
\end{enumerate}
\end{corollary}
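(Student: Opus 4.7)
The plan is to reduce both cases to Theorem~\ref{theorem:trees} by verifying that the Leaf Condition holds for $(Q,\alpha)$. Since stars are trees, once the Leaf Condition is established, the conclusion (optimality of the $E$-attack and any $E$-patrolling strategies, and $V=v^*$) follows immediately. The key is therefore to compute the extremity set $E$ explicitly for a star and compare it with the set of points lying within distance $\alpha/2$ of a leaf node.

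First I set up coordinates. Label the leaf arcs $a_1,\dots,a_n$ with lengths $L_1,\dots,L_n$ and leaf nodes $v_1,\dots,v_n$, meeting at the center~$O$. For a regular point $x$ on arc $a_i$ at distance $t$ from~$v_i$ (so $0<t<L_i$), the network $Q-\{x\}$ has two components of lengths $t$ and $\mu-t$. Hence $x\in E$ iff $\min(t,\mu-t)<\alpha/2$, i.e.\ iff either $t<\alpha/2$ or $t>\mu-\alpha/2$.

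For case~(i), I would use the balanced assumption $L_i\le\mu/2$ for every $i$. This forces $t<L_i\le\mu/2$, so $\mu-t\ge\mu/2\ge t$, giving $\min(t,\mu-t)=t$. Thus $x\in E$ iff $t<\alpha/2$, which is exactly the statement that $E$ consists of the points on each leaf arc within distance $\alpha/2$ of its leaf node. The Leaf Condition follows.

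For case~(ii), I would use the bound $\alpha\ge 2L_{\max}$. For any regular $x$ on arc $a_i$ we have $t<L_i\le L_{\max}\le\alpha/2$, so automatically $\min(t,\mu-t)\le t<\alpha/2$, giving $x\in E$. Thus every regular point belongs to $E$, and simultaneously every regular point lies within distance $\alpha/2$ of its leaf node, so the two descriptions coincide and the Leaf Condition holds. Neither case involves a substantive obstacle; the only subtlety is making sure in case~(ii) that having $\bar E=Q$ is compatible with the Leaf Condition as defined, but the bound $t<L_i\le\alpha/2$ handles this in one line. With the Leaf Condition verified in both cases, Theorem~\ref{theorem:trees} delivers the corollary.
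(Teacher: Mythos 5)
Your proof is correct and takes essentially the same approach as the paper: verify that the Leaf Condition holds in each case and then invoke Theorem~\ref{theorem:trees}. The paper merely asserts these verifications are ``easy to check''; your explicit computation of $E$ via $\min(t,\mu-t)<\alpha/2$ — including the observation in case (ii) that $t<L_i\le\alpha/2$ gives the Leaf Condition itself and not merely $\bar{E}=Q$ — correctly supplies the omitted details.
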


Note that if $Q$ is the line segment network, then by adding an artificial node in the center, we can apply Corollary~\ref{cor:stars}, part (i), recovering the result for the value of this game, given previously in~\cite{ALMP} (though the optimal strategies given here are different).

\subsection{Stars Not Satisfying the Leaf Condition} \label{sec:trees2}

In Lemma~\ref{thm:R-patrolling} we showed that the $E$-patrolling strategy intercepts any attack with
probability at least $v^{\ast }=\alpha /\left( \mu +\lambda \left( E\right)\right) $ and that (Lemma~\ref{lemma:spread_trees}) for trees satisfying the Leaf Condition, the $E$-attack strategy avoids interception with probability at least $v^{\ast }$.
Thus for trees we have $V=v^{\ast }$ if the Leaf Condition is satisfied, but what happens when it is not satisfied?
In this subsection we present a class of trees $Q$ for which the Leaf Condition fails for some values of $\alpha$ but nevertheless $V=v^{\ast }$ for all values of $\alpha$.
We do this by specifying particular attack strategies which are optimal on these trees.

We consider the class of skewed stars with $n$ arcs of length $1$ and one arc of length $x >n$, as shown in Figure~\ref{fig:x-n-star}. We refer to these skewed stars as {\em symmetric skewed stars}. The degree $1$ nodes incident to the arcs of length $1$ are denoted $a_1,\ldots,a_n$, the node of degree $n+1$ is denoted $a_0$ and the degree $1$ node at the end of the arc of length $x$ is denoted $b$. It is easy to see that symmetric skewed stars satisfy the Leaf Condition only for $\alpha \leq 2n$ and $\alpha \geq 2x$. In what follows we introduce attack strategies for these stars that guarantee $v^*$ for the attacker for $2n \leq \alpha \leq 2x$, and thus show that Conjecture~\ref{conj} holds for symmetric skewed stars for all values of $\alpha$.  Later, in Subsection~\ref{sec:simple_tree} we give an attack strategy on a non-star tree that also guarantees the value $v^*$ for the attacker and show that Conjecture~\ref{conj} holds for this example.

\begin{figure}[!ht]
    \centering
     \begin{tikzpicture}[line width=2pt]
    \tikzstyle{every node}=[draw,circle,fill=black,minimum size=1.5pt, inner sep=0pt]
    \draw (0,0) node (u)[label=90:${\ u\ }$, minimum size=1.5pt, color = black] {}
        -- ++(-180:1cm) [color= red,line width=3pt ] node (3) [label=60:${\ a_0\ }$, minimum size=4pt, color= black] {}
        -- (3);
    \draw (u)
        -- ++(0:5cm) node (v) [label=90:${\ v\ }$, minimum size=1.5pt, color= black] {}
        -- (u);
    \draw (v)
        -- ++(0:7cm) [color= red,line width=4pt ]node (x) [label=90:${\ b\ }$, minimum size=4pt, color= black] {}
        -- (v);
    \draw (3)
     -- ++(120:1.5cm) [color=red, line width=4pt ] node (1) [label=90:${\ a_1\ }$, minimum size=4pt, color=black]{}
     -- (3);
    \draw (3)
     -- ++(150:1.5cm) [color=red, line width=4pt ] node (2) [ minimum size=4pt, color=black]{}
     -- (3);
     \draw (3)
     -- ++(180:1.5cm) [color=red, line width=4pt ] node (6) [ minimum size=4pt, color=black]{}
     -- (3);
     \draw (3)
     -- ++(-150:1.5cm) [color=red, line width=4pt ] node (4) [minimum size=4pt, color=black]{}
     -- (3);
    \draw (3)
     -- ++(-120:1.5cm) [color=red, line width=4pt ] node (5)[label=-60:${\ a_n\ }$, minimum size=4pt, color= black] {}
     -- (3);
\end{tikzpicture}
    \caption{A symmetric skewed star. The extremity set $E$ consists of the $n+2$ thick (red) lines. The black line is the set $E^c$.}
    \label{fig:x-n-star}
\end{figure}

%\subsubsection{Optimal attack strategies on the $x-n$ star, $2n \leq \alpha \leq 2x$} \label{sec:6-1-1}

We define an attack strategy that we will show is optimal for symmetric skewed stars for $2n \leq \alpha \leq 2x$. %To aid notation we let $\theta = 2(\mu+\lambda(E))$; this gives $v^* = 2 \alpha/\theta$. Further,
We note that for the a symmetric skewed star with $2n \leq \alpha \leq 2x$ it is easy to check that $\lambda(E)=\alpha$ if $2n\leq \alpha \leq \mu=x+n$ and $\lambda(E)=\mu$ (equivalently, $\lambda(E^c)=0$) if $\alpha \geq \mu$. We denote the left and right boundary points of $E^c$ with $E$ by $u$ and $v$ respectively; since $2n \leq \alpha \leq 2x$, both of these points are on the long arc or on its boundary.

We note that the Leaf Condition for this star holds for $\alpha=2n$ but not for $2n < \alpha \leq 2x$, thus the $E$-attack strategy is not defined for the latter set of values. Thus, we define a new attack strategy. For $\alpha=2n$ either strategy can be used.

\begin{definition}[\textbf{Symmetric-skewed attack}]
The symmetric-skewed attack strategy is defined as follows:

\textit{Left attacks: }With probability $(2\lambda(E)-\alpha)/(\mu+\lambda(E))$, attack equiprobably at nodes $a_i$, for $i=1,...,n$, starting uniformly at times in $[n-1, \alpha+n-1]$ if $2n\leq \alpha \leq x+n$ and at times in $[\alpha-x-1, x+2(n-1)+1]$ if $x+n \leq \alpha \leq 2x$.

\textit{Middle attacks: }With probability $\lambda(E^c)/(\mu+\lambda(E))$, attack at a uniformly random point of $E^c$, starting equiprobably at times $\alpha/2+2j$ for $j =0,1,..., n-1$.

\textit{Right attacks: }With probability $\alpha/(\mu+\lambda(E))$, attack node $b$, starting at a time in $[0, \alpha+2(n-1)]$ chosen as follows: conditional on the attack taking place here, the starting time is given by the following probability cumulative function. For $z=1,..., n-1$,
\begin{equation*}
    f(y)=
    \begin{cases}
    \frac{z(y-z+1)}{n\alpha} & \text{if}\ \ 2(z-1)\leq y\leq 2z,\\
    \frac{n-1}{\alpha}+\frac{y-2(n-1)}{\alpha} & \text{if}\ \ 2(n-1)\leq y\leq \alpha,\\
    \frac{\alpha-(n-1)}{\alpha}+\frac{z(z-1)}{n\alpha}+\frac{(y-\alpha)(n-z)}{n\alpha} & \text{if}\ \ \alpha+2(z-1)\leq y\leq \alpha+2z.\\
    \end{cases}
\end{equation*}
\end{definition}
Note that when $\alpha \geq \mu$, we have $\lambda(E^c)=0$ so there are no middle attacks.
\begin{theorem} \label{thm:x-n-star}
Suppose $Q$ is a symmetric skewed star. For any $\alpha$ the $E$-patrolling is optimal and the value of the game is $V=v^*=\alpha/(\mu+\lambda(E))$. If $x>n$ and $2n\leq \alpha \leq 2x$ then the symmetric-skewed attack strategy is optimal, otherwise the $E$-attack strategy is optimal.
\end{theorem}

The proof of Theorem~\ref{thm:x-n-star} can be found in the Appendix. Theorem~\ref{thm:x-n-star} provides a counterexample to a conjecture in \cite{ALMP}. The conjecture was that for trees, if $\alpha$ is at least the diameter of the network, the value of the game is $\alpha/\bar{\mu} = \alpha/(2 \mu)$. For a symmetric skewed star, the diameter is $x+1$, and by Theorem~\ref{thm:x-n-star}, for $x+1 \le \alpha < 2x$, the value is $\alpha/(\mu + \lambda(E))$. This is not equal to $\alpha/(2 \mu)$, since $\lambda(E) < \mu$ in that range of $\alpha$, disproving the conjecture in \cite{ALMP}.

%The $6-1-1$ strategy does not guarantee the value of $v^*$ for the attacker for $\alpha > 8$. Note that for $\alpha > 8$ there are no middle attacks. Suppose that the Patroller stays at node $2$ until time $\alpha-1$, then arrives at node $1$ at time $\alpha+1$ and goes back to node $2$ at time $\alpha+3$ and then heads to node $9$ and arrives there at time $\alpha+10$. This patrol will intercept the left attacks with conditional probability of $1$. The attacks at node $9$ end at time $2 \alpha +2$ which is greater than $\alpha+10$ if and only only $\alpha \geq 8$. Thus there will be positive probability $p$ that the left attacks are intercepted. Thus, the total interception probability will be $\frac{2 \alpha}{\theta} 1 + \frac{2 \alpha}{\theta} p > \frac{2 \alpha}{\theta} = v^*$.

\subsection{A non-star tree with $\bar{E} = Q$ satisfying Conjecture~\ref{conj}.}
\label{sec:simple_tree}

We now consider the tree depicted in Figure~\ref{fig:simple_tree} with unit length arcs and $\alpha = 6$. This gives $\bar{E}=Q$ and thus $\lambda(E) = \mu$. Here $\mu=6$ and thus $v^* = \alpha/2\mu = 1/2$.

\begin{figure}[h]
    \centering
    \includegraphics[width = 1.3in]{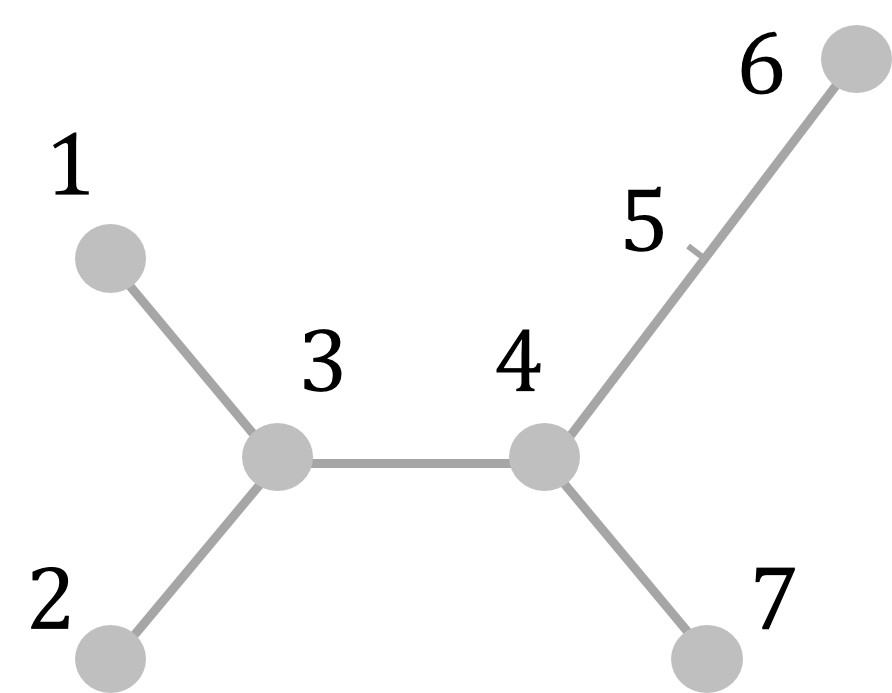}
    \caption{A tree with $\mu=6$.}
    \label{fig:simple_tree}
\end{figure}

We propose the following Attacker strategy for this specific tree with $\alpha = 6$.

\begin{itemize}
\item At each leaf node $1$ and $2$ attack with probability $6/24$ at a start time chosen uniformly in the interval $[0,6]$ (total attack probability $12/24$).
\item At leaf node $6$ attack with attack start time uniformly: in the interval $[0,2]$ with probability $2/24$, in the interval $[2,4]$ with probability $4/24$, in the interval $[4,6]$ with probability $2/24$ (total attack probability $8/24$).
\item At leaf node $7$ attack takes place with probability $4/24$ at a start time chosen uniformly in the interval $[1,5]$ (total attack probability $4/24$).
\end{itemize}

It is easy to verify that the probability of interception guaranteed by this strategy is $v^* = 1/2$, thus showing that the conjecture holds for this example; the proof is along the same lines as that of Theorem~\ref{thm:x-n-star}.

\section{Conclusions}

This paper models the problem of patrolling a pipeline or road system
against attacks which can be made anywhere, not just at a discrete set of
``targets''. We do this by analyzing the continuous patrolling game on
arbitrary metric networks $Q,d,$ where $d$ is the shortest path metric. The
Attacker picks a point of $Q$ to attack (not necessarily a node) during a
chosen time interval of given length $\alpha .$ The Patroller chooses a unit
speed path in the network and wins the game if the path crosses the attacked
point during the attack; otherwise the Attacker wins.  Mixed strategies are
required for optimal play in this game, where the payoff to the maximizing
Attacker is the probability that the attack is intercepted. Prior work of
\cite{ALMP} and \cite{Garrec} has solved the game for Eulerian
networks, the line (or interval) network and a network consisting of two
nodes connected by three arcs of certain lengths.

In this paper we show that for any network with total length $\mu $
and $l\geq 0$ leaf arcs, the value $V$ of the game (probability that the
attack is intercepted) is given by $V=\alpha /\left( \mu +l\alpha /2\right) $
when $\alpha $ is less than the minimum circuit length and also less than
twice the length of any leaf arc. So the game is completely solved on
\textit{any} network for sufficiently small positive $\alpha .$ If there are
no leaf arcs, the optimal patrol strategy reduces to a periodic cycle on the
network which covers every arc exactly twice. (We give a new proof that such
a cycle always exists.) Such a path is an efficient way of patrolling a
network.

Of course many networks, for example museum corridors, have
cul-de-sacs, which make them hard to patrol. Our general result, stated
above, solves this problem for short attack durations $\alpha ,$ but we also
have results for larger durations. For networks which have a tree structure,
we identify a useful technical property which implies that the value of the
game is given by $V=v^{\ast },$ where $v^{\ast }=\alpha /\left( \mu +\Lambda
\right) $ and $\Lambda $ is the total length of certain points near the leaf
nodes of the network. We conjecture that in fact $V=v^{\ast }$ for all
trees. We show that our technical property (and hence $V=v^{\ast })$ holds
for stars where no leaf arc has more than half the total length $\mu $ of
the star. Finally, we show that for stars with a single arbitrarily long arc
and the rest equal length short arcs, our conjecture $V=v^{\ast }$ holds.
Star networks are important and often occur at airports where there is a
central hub. The related problem of the ``uniformed patroller'' studied by
\cite{AlpernKatsikas19} and \cite{AlpernKatsikas21}, where the presence of the Patroller at the node chosen
for eventual attack can be detected by the Attacker, is studied in a spatial
context that can be viewed as a star network.

The knowledge of our results would be useful in designing networks
which are easier to patrol, as well as showing how to optimally patrol them.
Even when the network is given, one might add additional links between some
leaf nodes for the Patroller to use.  A useful extension to this problem
would be to make certain points of $Q$ more valuable than others, so that
successful attacks at such points are more costly to the Patroller and so
would need to be patrolled more intensively.

\subsection*{Acknowledgements} This material is based upon work supported by the National Science Foundation under Grant No. CMMI-1935826.

\section*{Appendix}

\paragraph{Proof of Lemma~\ref{lem:uniform}.}
The attack takes place during the time interval $J=\left[M,M+\alpha \right] $.
Since $S$ satisfies the unit speed condition (\ref{Lipshitz}), we have that $\lambda \left( S\left( J\right) \right) \leq \left\vert J\right\vert =\alpha $, where $\left\vert J\right\vert $ is the
length of $J$.
By the definition of the uniform attack strategy, the probability that the attack takes place in $S\left( J\right) $, and is thus intercepted, does not exceed $\lambda \left( S\left( J\right) \right) /\mu $, giving the claimed bound. 
\hfill$\Box$

\bigskip

\paragraph{Proof of Lemma~\ref{lemma:identifying}.}
First observe that the new metric $d^{\prime }$ will still have speed one.
If $S$ is a patrol on $Q$, then it satisfies (\ref{Lipshitz}) so%
\[
d^{\prime }\left( S\left( t\right) ,S\left( t^{\prime }\right) \right) \leq
d\left( S\left( t\right) ,S\left( t^{\prime }\right) \right) \leq \left\vert
t-t^{\prime }\right\vert ,\text{ }
\]%
which means that $S$ is still a patrol on $Q^{\prime },d^{\prime }$. On the
other hand, attacks on $Q^{\prime },d^{\prime }$ are the same as the attacks on $%
Q,d$. So the Patroller might have additional strategies whereas the Attacker
has no new strategies. Thus the new game can only be the same or better for
the Patroller, giving the main inequality. If the length of an arc is
decreased then the new metric satisfies the assumption  $0\leq d^{\prime
}\left( x,y\right) \leq d\left( x,y\right) $. Finally suppose $x$ and $y$
are identified, so that $Q^{\prime },d^{\prime }$ has the quotient topology.
For any points $z$ and $w$ in $Q-\left\{ x,y\right\} $ we have
\[
d^{\prime }\left( z,w\right) =\min \left\{ d\left( z,w\right) ,d\left(
z,x\right) +d\left( x,w\right) ,d\left( z,y\right) +d\left( y,w\right)\right\} \leq
d\left( z,w\right),
\]%
so the result follows from the first part of the proof. 
%so as shown above, patrols on $Q,d$ remain patrols on $Q^{\prime },d^{\prime
%}$. On the other hand, the Attacker has no new pure strategies and indeed
%attacks on $x$ and $y$ will both be intercepted if one of them is. \Halmos
\hfill$\Box$

\bigskip

\paragraph{Proof of Lemma~\ref{lemma:paired}}

This proof mimics the usual proof of Euler's Theorem.
We first construct a circuit $C$ satisfying condition~(\ref{paired}),
which we call a $\ast$-circuit, using the following rules:

\begin{enumerate}
\item Start at any node $x$ and leave by any passage $P$ (we let $P'$ be the paired passage of $P$).

%\item Always leave a node by an untraversed passage.
\item Always leave a node by an untraversed passage not paired with the arriving passage.

\item If, after arriving at a node, there are three untraversed passages with exactly
two of them paired, leave by one of this pair.

\item If, after arriving at node $x$, there are two untraversed passages, leave by passage $P^{\prime }$, if it is untraversed.

\item If there are no remaining untraversed passages after arriving at a node, stop.
\end{enumerate}

To simply obtain a circuit (not necessarily satisfying (2)) starting and ending at $x$, we would follow the usual method of simply leaving a node by any \emph{untraversed passage}, a simpler form of Rule~2. The existence of an untraversed passage (at any node other than the starting node $x$) follows from the fact the after arriving at a node an odd number of passages will have been traversed, so an odd number (hence not 0) are untraversed. We show that the full form of Rule~2 along with the other rules ensure that we can always leave a node in a way that satisfies (2) whether the node is the initial node $x$ or another node $y$.

We first check that after arriving at a node $y$ other than the starting node $x$, there cannot be only one remaining untraversed passage which is paired with the arriving passage. Since every node has even degree and there are no degree two nodes, the node $y$ must have been previously arrived at. After this previous arrival at $y$, there must have been three untraversed passages with exactly two of them paired. But Rule~3 ensures the circuit left by one of those two passages, so after arriving by the other one on the final visit, the last untraversed passage must have a different label.

To check that the final arriving passage at the initial node $x$ is not $P^{\prime }$, note that if $P^{\prime }$ had not been traversed before the penultimate visit to $x$, Rule~4 ensures that it will be traversed on that visit, and it will not be the final arriving passage.

%Since we have shown that $\ast$-circuits exist, let $C$ be one of maximum length.
If $C$ is a tour (contains all the arcs), we are done. Otherwise, since $Q$ is connected, there is a node $z$ with some passages in $C$ and some not in $C$ (see Figure~\ref{fig:Euler-proof-KP}).
Suppose that $C$ leaves $z$ beginning via passage $a$ and ends at $z$ via passage $b$. We create a new $\ast$-circuit starting at $z$, called $C^{\prime}$, using the same rules and using only passages not in $C$. Suppose $C^{\prime }$ begins with a passage called $d$ (which we can choose) and ends with a passage called $e$ (which we cannot control). The combined circuit $CC'$ which starts at $z$ and traverses $C$ and then $C'$ will satisfy (2) except possibly for the transitions $b,d$ and $e,a$ between the two circuits, so we need $d\neq b^{\prime }$ and $e\neq a^{\prime }$ (this means $d$ is not paired with $b$ and $e$ is not paired with $a)$. The arc $d$ is chosen as follows.

\begin{figure}[H]
\center
  \includegraphics[scale=0.3]{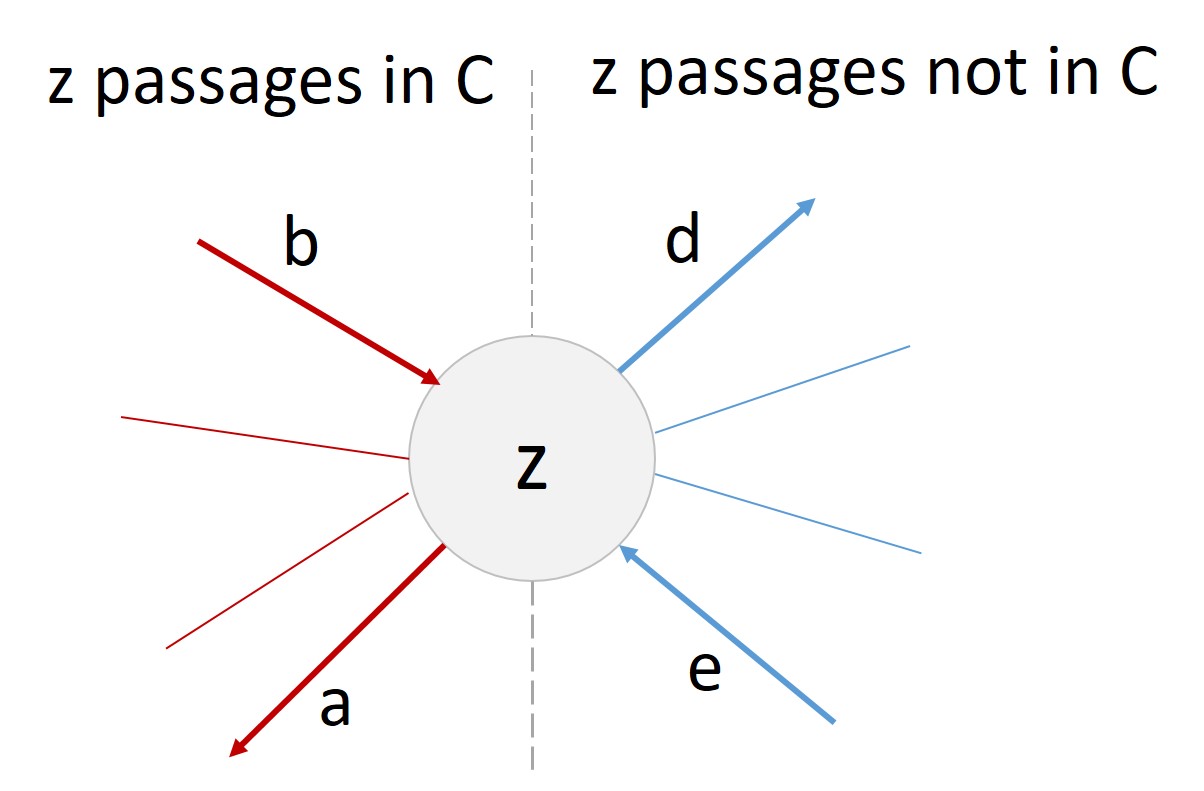}
  \caption{How to join two $\ast$-circuits at node $z$.}
  \label{fig:Euler-proof-KP}
\end{figure}

\begin{enumerate}
\item If $a^{\prime }$ is not in $C$, take $d=a^{\prime }$. This ensures
that $d=a^{\prime }\neq b^{\prime }$ since $a\neq b$. Also $e\neq
d=a^{\prime }$.

\item If $a^{\prime }$ is in $C$, take $d\neq b^{\prime }$. We know that
also $e\neq a^{\prime }$ because $a^{\prime }$ is in $C$.
\end{enumerate}

%Since the circuit $CC'$ is larger than $C$, the latter cannot have been maximal, contrary to assumption. So $C$ must contain all arcs and is thus an Eulerian tour satisfying (*).
If the circuit $CC'$ is not a tour, we iteratively continue to add new circuits until we end up with a tour, noting that the process is guaranteed to end since every new circuit contains at least one new arc and there are a finite number of arcs. 
\hfill$\Box$

\bigskip

\paragraph{Proof of Lemma~\ref{lemma:spread_trees}.}
We fix a best response $S$ to the $E$-attack strategy, and show that the probability of interception is no more than $v^*$. To do this, we will define a new network $Q'$ of total length $\mu + \lambda(E)$ and a patrol $S'$ of $Q'$, and show that the probability $S$ intercepts the $E$-attack strategy on $Q$ is equal to the probability that $S'$ intercepts the uniform attack strategy (starting at time $t=M$) on $Q'$. The latter probability is at most $v^*$, by Lemma~\ref{lem:uniform}, so this will complete the proof.

The network $Q'$ is derived from $Q$ by replacing each component $E_i$ of $E$ with a loop $L_i$ of length $2 e_i$, where $e_i = \lambda(E_i)$. This is possible by the Leaf Condition, and clearly $\lambda(Q')=\mu + \lambda(E)$. Note that the probability the attack takes place on $E_i$ under the $E$-attack strategy is equal to the probability the attack takes place on $L_i$ under the uniform attack strategy. Since $S$ is a best response, we can assume that whenever it enters some component $E_i$, it proceeds directly to the leaf node of $E_i$, arriving at some time $t_1$, then leaves at a later time $t_2$, and returns directly to $E^c$. We will show later that we can assume $t_1=t_2$, so that $S$ performs tours of the components of $E$. We define the Patroller strategy $S'$ on $Q'$ by setting it equal to $S$ when $S$ is in $E^c$, and replacing any tour that $S$ performs of a component $E_i$ of $E$ in $Q$ with a tour of the loop $L_i$ in $Q'$.

Let $p_0$ be the probability the attack on $Q$ is intercepted by $S$, conditional on it taking place on $E^c$ and let $q_0$ be the corresponding conditional probability for $S'$ and $Q'$. Clearly, $p_0=q_0$. For every component $E_i$ of $E$, we also define $p_i$ to be the probability that the attack on $Q$ is intercepted by $S$, conditional on it taking place at the leaf node $x_i$ in the closure of $E_i$.  Similarly, we define $q_i$ to be the probability that the attack on $Q'$ is intercepted by $S'$ conditional on it taking place in $L_i$. It is sufficient to show that $p_i=q_i$ for each $i$.

The timing of the attacks on $Q$ is shown in Figure~\ref{fig:R-attack-v2-TL}. The first attack at $x_i$ finishes at time $M-e_i+\alpha$ and the last attack starts at time $M+e_i$. By the Leaf Condition, $M+e_i \le M-e_i+\alpha$, so $p_i=1$ if and only if the patrol visits $x_i$ in the time interval $[M+e_i, M-e_i+\alpha]$. Recall that $t_1$ and $t_2$ are the respective times that $S$ arrives at and leaves node $x_i$.

\begin{figure}[H]
    \centering
    \includegraphics[width = 4in]{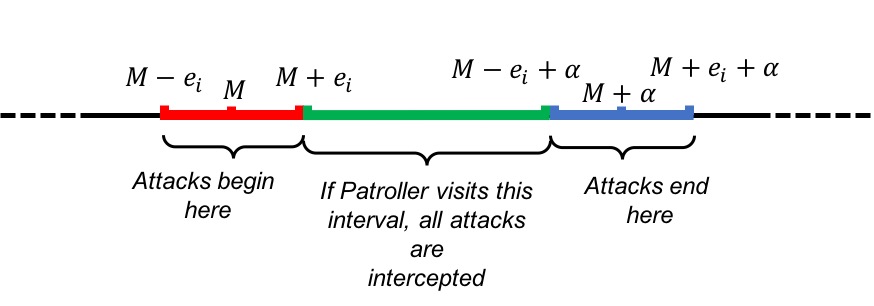}
    \caption{Timing of the attacks on $Q$.}
    \label{fig:R-attack-v2-TL}
\end{figure}

First suppose $p_i=1$. In this case, there is some time $t_0 \in \left[ M+e_i, M-e_i + \alpha \right]$ when the Patroller is at $x_1$, so we may as well assume that $t_1=t_2=t_0$ (otherwise we can replace $S$ with a patrol that dominates it). So that $S$ performs a tour of $E_i$ during the time interval $[t_0-e_i, t_0 + e_i] \subset [M, M+\alpha]$. This means that $S'$ also performs a tour of $L_i$ during this time interval, and therefore  $q_i=1$.

Now suppose that $p_i<1$. In this case, we must have either $t_1 > M - e_i + \alpha$ or $t_2 < M + e_i$. In the former case, the probability of an attack starting at $x_i$ after time $t_1$ is zero, so we can assume that $t_2=t_1$. In other words, $S$ performs a tour of $E_i$ in the time interval $[t_1-e_i,t_1+e_i]$, and $S'$ performs a tour of $L_i$ in the same time interval. If $t_1 \ge M + e_i + \alpha$ then $p_i=q_i=0$. If $t_1 \le M + e_i + \alpha$, then $S$ intercepts the attack if it starts at $x_i$ in the time interval $[t_1-\alpha, M+e_i]$, so $p_i=(M+e_i+\alpha - t_1)/(2e_i)$. Furthermore, $S'$ intercepts the attack if it takes place in $S([t_1-e_i, M+\alpha]) \subset L_i$, so $q_i=p_i$.

A similar argument holds for the case of $t_2 < M+e_i$ and this completes the proof. 
\hfill$\Box$

%On the other hand, if $t_2 < M+r_i$, then the probability of an attack finishing before time $t_2$ is zero, so we may as well assume that $t_1=t_2$. So $S$ performs a tour of $R_i$ and $S'$ performs a tour of $L_i$ in the time interval $[t_2 - r_i, t_2 + r_i]$. If $t_2 \le M-r_i$ then $p_i=q_i=0$. If $t_2 \ge M-r_i$, then $S$ intercepts that attack if it starts at $x_i$ in the time interval $[M-r_i, t_2]$, so $p_i = (t_2+r_i-M)/(2r_i)$. Furthermore, $S'$ intercepts the attack if it takes place in $S([M, t_2 + r_i])$, so $q_i=p_i$.

\bigskip
 \paragraph{Proof of Theorem~\ref{thm:x-n-star}.}
It is enough to show the statement is true for the case that $x>n$ and $2n\leq \alpha \leq 2x$. Assume the Attacker uses the symmetric-skewed attack strategy. Let $p_i$ be the probability that the attack is intercepted, conditional on it taking place at node $a_i$ for $i =1,..., n$. Let $q_j$ be the probability that the attack is intercepted, conditional on it taking place at $E^c$ at time $\alpha/2+2j$, for $j= 0,...,n-1$. Let $p_b$ be the probability that the attack is intercepted, conditional on it taking place at node $b$.

It is easy to see that if $p_b=1$, then the patrol must either stay at node $b$ until time $\alpha+2(n-1)$ or arrive at node $b$ at time $\alpha$ or earlier (then stay there). In both cases, we have $q_j=0$ for all $j$ since the patrol cannot be in $\lambda(E^c)$ during the time interval $[\alpha/2, 3\alpha/2+2(n-1)]$. Similarly, we have $p_i=0$ for all $i$. Thus, the probability of interception is $\alpha/(\mu+\lambda(E))$.

Next, suppose $p_b <1$ and the patrol stays at node $b$ until some time $t< \alpha+2(n-1)$. We will split this case into two subcases: $2n\leq \alpha \leq \mu$ and $\mu \leq \alpha \leq 2x$.

Considering the first subcase, $2n\leq \alpha \leq \mu$, we assume the patrol arrives node $v$ (the right side of $E^c$) at time $r=t+\alpha/2$, and $q_0$ will be bounded by the function $\gamma(r)$ as below.

If $\lambda(E^c) \geq \alpha$,
\begin{equation*}
    \gamma(r)=
    \begin{cases}
    \frac{r+\lambda(E^c)-\alpha/2}{\lambda(E^c)} & \text{if}\ \ 0\leq r \leq 3\alpha/2-\lambda(E^c),\\
    \frac{\alpha}{\lambda(E^c)} & \text{if}\ \ 3\alpha/2-\lambda(E^c) \leq r \leq \alpha/2,\\
    \frac{3\alpha/2-r}{\lambda(E^c)}& \text{if}\ \ \alpha/2 \leq r \leq 3\alpha/2,\\
    0 & \text{if}\ \ 3\alpha/2 \leq r.
    \end{cases}
\end{equation*}

If $\lambda(E^c) \leq \alpha$,
\begin{equation*}
    \gamma(r)=
    \begin{cases}
    \frac{r+\lambda(E^c)-\alpha/2}{\lambda(E^c)} & \text{if}\ \ \alpha/2 - \lambda(E^c)\leq r \leq \alpha/2,\\
    1 & \text{if}\ \ \alpha/2 \leq r \leq 3\alpha/2- \lambda(E^c),\\
    \frac{3\alpha/2-r}{\lambda(E^c)}& \text{if}\ \ 3\alpha/2- \lambda(E^c) \leq r \leq 3\alpha/2,\\
    0 & \text{if}\ \ 3\alpha/2 \leq r .
    \end{cases}
\end{equation*}
Thus, for $j=1,...,n-1$, the probability $q_j$ is bounded by $\gamma(r-2j)$.

Without loss of generality, we assume the patrol arrives at node $a_1$ at time $s=t+x+1$ then moves within all $a_i$ ($i=1,...,n$) thereafter. If the patrol visits every other node $a_i$ ($i \neq 1$) before  returns to $a_1$, it takes time $2n\leq \alpha$. So, all attacks at node $a_1$ happening from time $s-\alpha$ will be intercepted and $p_1$ is bounded above by
\begin{equation*}
    \delta(s)=
    \begin{cases}
    1 & \text{if}\ \ 0\leq s \leq \alpha+ n-1,\\
    \frac{(2\alpha+n-1)-s}{\alpha} & \text{if}\ \ \alpha+n-1 \leq s \leq 2\alpha+n-1.
    \end{cases}
\end{equation*}
If the patrol moves directly from $a_1$ to $a_2$, then $p_2$ is bounded above by $\delta(s+2)$. Upper bounds for the other $p_i$ can be calculated in the same way.

A patrol that leaves node $b$ at time $t$, arrives at $E^c$ at time $t+\alpha/2$, crosses $E^c$ to reach node $a_1$ at time $t+x+1$, then moves within $a_i$ thereafter has interception probability $p(t)$ given by
\begin{align*}
p(t) =&\frac{\alpha}{\mu +\lambda(E)} f(t)+ \frac{\lambda(E^c)}{n(\mu+\lambda(E))}\sum_{j=0}^{n-1}\gamma(t+\frac{\alpha}{2}-2j)\\
&+\frac{\alpha}{n(\mu +\lambda(E))}\sum_{i=1}^{n}\delta(t+x+1+2(i-1))\\
&= \frac{\alpha}{\mu+\lambda(E)}.
\end{align*}

We now consider the second subcase, $\mu \leq \alpha \leq 2x$. In this case, $\lambda(E)=\mu$ so that $E^c$ is empty and there are no middle attacks. We assume the patrol leaves node $b$ at time $t$, visits all nodes $a_i$ at time $t+x+1+2(i-1)$ then returns to $x$ at time $t+2\mu$.

Let $s$ be the time the patrol arrives at node $a_i$. Then $p_i$ is bounded by the function $g(s)$ given below.
\begin{equation*}
    g(s)=
    \begin{cases}
    \frac{s-(\alpha-x-1)}{2(x+n)-\alpha} & \text{if}\ \ \alpha-x-1\leq s \leq x+2(n-1)+1,\\
    1 & \text{if}\ \ x+2(n-1)+1\leq s \leq 2\alpha-x-1,\\
    \frac{\alpha+x+2(n-1)+1-s}{2(x+n)-\alpha}& \text{if}\ \ 2\alpha-x-1 \leq s \leq \alpha+x+2(n-1)+1.
    \end{cases}
\end{equation*}

Since the patrol stays leaves node $b$ at time $t$ and returns at $t+2\mu$, the upper bound $h(t)$ of $p_b$ is
$$h(t)= f(t)+1-f(t+2\mu-\alpha).$$

So, the interception probability $p(t)$ of the patrol is
$$p(t)=\frac{\alpha}{2\mu}h(t)+ \frac{2\mu-\alpha}{n\mu}\sum_{i=1}^n g(t+x+1+2(i-1))=\frac{\alpha}{2\mu} = \frac{\alpha}{\mu+\lambda(E)}.$$

\hfill$\Box$

\end{document}